\documentclass{article}

\usepackage[preprint]{neurips_2026}
\usepackage{amsmath}
\usepackage{amssymb}
\usepackage{mathtools}
\usepackage{amsthm}
\usepackage{dsfont}
\usepackage{todonotes}
\usepackage{csquotes}
\usepackage{thm-restate}
\usepackage{hyperref}
\usepackage{cleveref}
\usepackage{comment}
\usepackage{xspace}
\usepackage{paralist}

\AddToHook{cmd/appendix/before}{\crefalias{section}{appendix}}
\AddToHook{cmd/appendix/before}{\crefalias{subsection}{appendix}}

\newtheorem{theorem}{Theorem}[section]
\newtheorem{lemma}[theorem]{Lemma}
\newtheorem{corollary}[theorem]{Corollary}
\newtheorem{claim}[theorem]{Claim}
\newtheorem{observation}[theorem]{Observation}

\theoremstyle{definition}

\newtheorem{definition}[theorem]{Definition}

\newtheorem{conjecture}[theorem]{Conjecture}

\DeclareMathOperator*{\polylog}{poly\,log}

\DeclareMathOperator{\symDiff}{\Delta}
\newcommand{\R}{\ensuremath{\mathbb{R}}}
\newcommand{\N}{\mathds{N}}
\newcommand{\I}{\mathcal{I}}
\newcommand{\eps}{\ensuremath{\varepsilon}}
\newcommand{\our}{\texttt{FFTHungarian}\xspace}
\newcommand{\their}{\texttt{PAWL}\xspace}

\newcommand{\leftChild}[1]{\ensuremath{\mathsf{left}_{#1}}}
\newcommand{\rightChild}[1]{\ensuremath{\mathsf{right}_{#1}}}

\newcommand{\ko}[1]{\textcolor{red}{(KO: #1)}}
\newcommand{\mcs}[1]{\textcolor{teal}{(MCs: #1)}}
\newcommand{\sa}[1]{\textcolor{cyan}{(SA: #1)}}
\newcommand{\jc}[1]{\textcolor{blue}{(JC: #1)}}
\newcommand{\an}[1]{\textcolor{purple}{(AN: #1)}}

\newcommand{\dm}[1]{\textcolor{green}{(DM:#1)}}
\newcommand{\ignore}[1]{}
\newcommand{\newstuff}[1]{#1}

\usepackage[utf8]{inputenc} % allow utf-8 input
\usepackage[T1]{fontenc}    % use 8-bit T1 fonts
\usepackage{hyperref}       % hyperlinks
\usepackage{url}            % simple URL typesetting
\usepackage{booktabs}       % professional-quality tables
\usepackage{amsfonts}       % blackboard math symbols
\usepackage{nicefrac}       % compact symbols for 1/2, etc.
\usepackage{microtype}      % microtypography
\usepackage{xcolor}         % colors

\title{Computing All Optimal Partial $p$-Wasserstein Matchings on the Line}

\author{%
  Sebastian Angrick\\
  Karlsruhe Institute of Technology\\
  \texttt{sebastian.angrick@kit.edu} \\
    \And
  Jacobus Conradi \\
  University of Copenhagen \\
  \texttt{jaco@di.ku.dk} \\
    \And
  Mónika Csikós \\
  Université Paris Cité, IRIF \\
  \texttt{csikos@irif.fr} \\
    \And
  Niko Hastrich \\
  ETH Zurich\\
  \texttt{niko.hastrich@inf.ethz.ch} \\
    \And
  Danny Mittal \\
  University of Maryland, College Park\\
  \texttt{dannymittal@gmail.com} \\
    \And
  André Nusser \\
  Université Côte d'Azur, CNRS, Inria\\
  \texttt{andre.nusser@cnrs.fr}\\
    \And
  Krzystof Onak \\
  Boston University\\
  \texttt{konak@bu.edu} \\
    \And
  Sharath Raghvendra \\
  North Carolina State University\\
  \texttt{skraghve@ncsu.edu} \\
}

\begin{document}

\maketitle

\begin{abstract}
%\todo{Add funding to the Acknowledgement section at the end of the paper!}
For $p \ge 1$, the $p$-Wasserstein distance measures the minimum cost of
transporting probability mass between distributions, where moving unit mass between two points costs the
$p$th power of their distance. For discrete distributions in one dimension, full transport is especially
simple: after sorting, mass is matched in order along the line. By contrast,
partial and unbalanced transport on the line remains much less understood.
%Recently, Chapel and Tavenard~[ICLR'25] showed that, for $p=1$,  all optimal partial transport plans between distributions supported on $n$
%points, with each point carrying uniform mass,
%can be computed in $O(n\log n)$ time, using the metric property of distances. For $p>1$, these properties no longer apply and all existing
%approach requires $\Omega(n^2)$ time. Our main contribution is an FFT-based
%data structure for balanced-interval transport queries, which bypasses this
%quadratic bottleneck and yields an $O(p\,n\log^2 n)$-time algorithm for
%computing all optimal partial transports on the line for every finite
%$p\ge 1$. We also provide an open-source \texttt{C++} implementation that
%outperforms the state-of-the-art baseline on a range of synthetic instances.
Recently, Chapel and Tavenard~[ICLR'25] showed that, for $p=1$, all optimal
partial transport plans between distributions supported on $n$ points, with
uniform mass at each point, can be computed in $O(n\log n)$ time by exploiting
the metric structure of the cost. For $p>1$, this structure no longer applies,
and existing approaches require $\Omega(n^2)$ time. Our main contribution is an
FFT-based data structure for balanced-interval transport queries, which bypasses
this quadratic bottleneck and yields an $O(p\,n\log^2 n)$-time algorithm for
computing all optimal partial transports on the line for every finite
$p\ge 1$. We also provide an open-source \texttt{C++} implementation that
outperforms the state-of-the-art baseline on a range of synthetic instances.
%Recently, Chapel and Tavenard [ICLR'25] showed that for $p=1$, all optimal partial transport plans between uniform distributions on $n$ points can be computed in $O(n\log n)$ time. We extend their guarantee to all finite $p\geq1$: using an FFT-based balanced-interval data structure to bypass the quadratic bottleneck of naively evaluating augmenting-path costs, we compute all optimal partial transports on the line in $O(p\,n\log^2 n)$ time. We also provide an open-source \texttt{C++} implementation that outperforms the state-of-the-art baseline on a range of synthetic instances.
Finally, we establish a conditional lower bound for $p=\infty$: any subquadratic-time algorithm for computing all optimal partial transport plan costs on the line would violate the $(\min,+)$-Convolution Hypothesis. This separates the problem from full optimal transport, which is solvable in $O(n\log n)$.
%\todo{bibliography does not show arxiv/DOIs... maybe switch to plainurl}
\end{abstract}
\section{Introduction}

%\todo{mention asymmetric optimal transport ($|R|\neq |B|$)}

Optimal transport characterizes the difference between probability distributions $\mu$ and $\nu$ by measuring the minimum transportation cost with which we can transform $\mu$ into $\nu$. % with minimum transport of mass.
One particular flavor of optimal transport, the $p$-Wasserstein distance, is defined as the $p$th root of the minimum transportation cost needed
to move mass from $\mu$ to $\nu$, where transporting an amount of mass $\delta$
between points $a$ and $b$ incurs cost $\delta\, d(a,b)^p$; where $d(a,b)$ is the distance between \hbox{$a$ and $b$}. %The general setting is often referred to as optimal transport.
This metric is used widely for comparing distributions and point sets~\cite{rubner2000earth}.
Its appeal stems from its mathematical flexibility, its sensitivity to the
underlying geometry, and the fact that optimal transport plans provide explicit
correspondences between points, aiding interpretability and model transparency.
These features have made the $p$-Wasserstein distance a central tool in machine
learning and statistics~\cite{villani2009optimal,peyre2019computational}.
%\todo{Such a sentence deserves more citations, no? \url{https://www.annualreviews.org/content/journals/10.1146/annurev-statistics-030718-104938}}

In many practical settings the data may be noisy, contain outliers, or exhibit only partial overlap, making it undesirable to transport all available mass.
This has motivated substantial interest in the \emph{partial} and \emph{unbalanced} variants of optimal transport.
In these settings, the problem is not only to determine the transport plan, but also to decide how much mass should participate in the transport, while identifying and excluding points that are better treated as outliers.
A natural way to search for the ideal trade-off is through the \emph{optimal transport
profile} (or OT-profile), which maps each fraction $\alpha \in [0,1]$ to the
minimum cost of a transport plan that moves total \hbox{mass $\alpha$.} OT-profiles
have been used in applications such as noise filtering and positive-unlabeled learning~\cite{chapel2020partial,phatak2023computing,mukherjee2021outlier} and are a key component in the definition of the robust partial Wasserstein
metric~\cite{raghvendra2024new}, a robust variant of the
$p$-Wasserstein distance.
%that generalizes the $p$-Wasserstein distance, the
%L\'evy--Prokhorov distance~\cite{prokhorov1956convergence}, and total variation
%distances while providing improved robustness to noise.
%They also play a central role in the robust partial Wasserstein (RPW) distance~\cite{raghvendra2024new}, a recently proposed family of robust transport-based metrics. By varying its parameters, RPW recovers or interpolates between several classical distances, including total variation, p-Wasserstein, and, in a limiting/special-parameter case, the L\'evy--Prokhorov distance~\cite{prokhorov1956convergence}. This makes RPW a unifying framework while improving robustness to noise and outliers.

%In general, the OT-profile may have high representation complexity. In this
%paper, we focus on the discrete uniform setting, where each distribution is
%supported on $n$ points and each point carries mass $1/n$. In this setting,
%transporting mass $k/n$ corresponds to computing a minimum-cost matching of
%cardinality $k$ between the two point sets, which we refer to as a
%$k$-matching. Consequently, the OT-profile is fully determined by the sequence
%of costs $C_1,\ldots,C_n$, where $C_k$ denotes the minimum cost of a
%$k$-matching.

In general, the OT-profile may have high representation complexity. We focus on
the discrete uniform setting where each distribution is supported on $n$ points
of mass $1/n$.
In this setting, transporting mass $k/n$ amounts to selecting
$k$ vertex-disjoint pairs, each consisting of one point from each distribution; we call
such a set of pairs a \emph{$k$-matching}. The $p$-Wasserstein cost of a $k$-matching $M$ corresponds to the sum
of the transport costs over its pairs, namely $\sum_{(a,b)\in M} d(a,b)^p$.
Thus, the OT-profile is fully determined by the sequence of costs
$C_1,\ldots,C_n$, where $C_k$ is the minimum cost of any $k$-matching.

From an algorithmic perspective, computing a full optimal transport plan and
computing the OT-profile have the same best-known asymptotic complexity in
general settings. For arbitrary real-valued costs, both can be computed in
$O(n^3)$ time: the Hungarian algorithm~\cite{km_hungarian} constructs a minimum cost full matching by
successively computing minimum cost $k$-matchings for
$k=1,\ldots,n$, and therefore produces the entire OT-profile as a byproduct~\cite{phatak2023computing}.
In low-dimensional geometric settings, the full transport plan as well as the OT-profile can be computed in near-quadratic time using
$O(n^2)$ queries to a dynamic bichromatic closest pair data structure~\cite{vaidya1989geometry,aes_sjc99}. Thus,
near-quadratic time is a natural baseline even for low-dimensional geometric instances.
However, such running times can still be prohibitive for large-scale optimal
transport problems.

The one-dimensional setting is a notable exception to this scalability
challenge. Here, the transport cost between two points $a$ and $b$ is simply
$|a-b|^p$, and full optimal transport can be computed by sorting the points and
matching them in order, giving a near-linear-time algorithm for all
\hbox{$p \ge1$}~\cite{villani2009optimal,peyre2019computational,santambrogio2015optimal}.
This simplification underlies scalable machine-learning methods such as the
sliced Wasserstein distance, which reduces high-dimensional transport to a
collection of one-dimensional transport problems obtained by projecting data
onto lines~\cite{peyre2019computational,rabin2011barycenter,nguyen2025introductionslicedoptimaltransport}.

This computational simplicity, however, does not automatically extend to
partial and unbalanced optimal transport \cite{bai2023sliced,DBLP:conf/iclr/ChapelT25}: even in one dimension, efficiently
computing the full OT-profile remains nontrivial. Recently,
\cite{DBLP:conf/iclr/ChapelT25} showed that for $p=1$, the entire OT-profile on
the line can be computed in $O(n\log n)$ time for distributions supported on
$n$ points with uniform mass, matching the complexity of computing full optimal
transport in this setting. For all $p>1$, however, their approach requires
$\Theta(n^2)$ time. This raises a natural question:
 
 \emph{For $p>1$, can the OT-profile be computed with
the same efficiency as full one-dimensional optimal transport, or does a quadratic barrier arise?}

The difficulty of larger values of $p$ can already be seen in the following simple example.
Given two point sets $B,R \subset \mathbb{R}$ with $B = \{b_1, \dots, b_n\}$ and $R = \{r_1, \dots, r_n\}$ with %, and all points of $B$ lie to the left of all points of $R$, i.e., 
$b_1 < \cdots < b_n < r_1 < \cdots < r_n$.
For $p=1$, an optimal $k$-matching pairs the last $k$ points of $B$ with the first $k$ points of $R$.
Since any bijection between $\{b_{n-k+1},\ldots,b_n\}$ and $\{r_1,\ldots,r_k\}$ has the same $1$-Wasserstein cost, we can impose the following nested structure on the matching: $M_k = \{(b_{n-i+1}, r_i) : i = 1,\ldots,k\}$.
Moving from $M_k$ to $M_{k+1}$ simply requires adding the edge $(b_{n-k}, r_{k+1})$ without modifying existing pairs.
Hence, across all $k$ only $O(n)$ distinct edges appear.

For $p>1$, convexity breaks this degeneracy: the optimal pairing is now unique.
Even though the min-cost $k$-matching still matches the last $k$ points of $B$ to the first $k$ points of $R$, we now have to match them in order, i.e., the min-cost $k$-matching is $M_k = \{(b_{n-k+i}, r_i) : i = 1,\ldots,k\}$.
Note that increasing the cardinality from $k$ to $k+1$ forces every edge to shift in order to obtain the matching $M_{k+1} = \{(b_{n-(k+1)+i}, r_i) : i = 1,\ldots,k+1\}$.
Thus $M_k$ and $M_{k+1}$ share \emph{no} edges. The sequence $M_1, M_2, \ldots, M_n$ therefore contains $\Theta(n^2)$ distinct edges.
Consequently, any approach that explicitly maintains the partial matchings
incurs quadratic total complexity for $p>1$. %For $p=\infty$, we use this
\subsection{Our Contribution.}
In this work, we show that for any fixed $p \in \N$, the matching profile under the $p$-Wasserstein cost can be computed in $O(p\,n\log^2 n)$ time. 
We complement this upper bound with an open-source \texttt{C++} implementation that outperforms the state-of-the-art on a broad range of synthetic instances.
Finally, for $p=\infty$, we establish a conditional lower bound ruling out truly subquadratic algorithms. % unless a popular fine-grained complexity conjecture fails.

{\bf Upper Bounds.}
We show the following theorem, constituting our main result.

\begin{restatable}[$k$-partial $p$-Wasserstein Distance]{theorem}{maintheorem}
\label{thm:algo-partial-wasserstein}
Let $B, R \subset \mathbb{R}$ be two point sets of size $n$. For any $p \in \mathbb{N}$, 
the $k$-partial $p$-Wasserstein distance between $B$ and $R$ can be computed for all $k \in [n]$ in a total of 
$O(p\,n \log^{2} n)$ time using $O(n \log n)$ space.
\end{restatable}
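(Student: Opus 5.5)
We follow the Hungarian-style incremental framework (as in Chapel and Tavenard): sort $B\cup R$ and compute $C_1,\dots,C_n$ by successive shortest augmenting paths. On the line, the standard structural facts (which we reprove) are the following.

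First, there is an optimal $k$-matching consisting of non-crossing, order-preserving edges inside ``balanced'' intervals. Second, each augmentation from $M_{k}$ to $M_{k+1}$ replaces the matching on a contiguous interval of the sorted sequence. That interval contains an equal number of free endpoints on both sides, and the new matching inside it is the in-order matching of its points. Hence $C_{k+1}-C_k=\min$ over candidate intervals $I$ of $\mathrm{cost}(I)-\mathrm{cost}_{M_k}(I)$, where $\mathrm{cost}(I)$ is the in-order matching cost of the balanced interval $I$. Third, by the $p=1$ argument adapted to convex costs, the candidate intervals form a laminar (tree-like) family that can be maintained with a priority queue. Only $O(n)$ candidates are created over the whole run, each arising from merging adjacent components. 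So the combinatorial bookkeeping costs $O(n\log n)$ once each candidate's gain is known.

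The quadratic bottleneck is evaluating $\mathrm{cost}(I)=\sum_i |r_{j+i}-b_{\ell+i}|^p$ for a shifted in-order matching. As the example in the introduction shows, consecutive matchings may share no edges. The key new ingredient is therefore a data structure answering balanced-interval transport queries. For integer $p$ and a segment where every blue point lies left of the matched red point (sign fixed), we expand
\[
\sum_i (r_{j+i}-b_{\ell+i})^p=\sum_{t=0}^{p}\binom{p}{t}(-1)^{p-t}\sum_i r_{j+i}^{t}\,b_{\ell+i}^{p-t}.
\]
Each inner sum is a correlation of the sequences $(r^t)$ and $(b^{p-t})$ restricted to a window. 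We build a segment tree over the sorted order. At each node, for each $t\le p$, we precompute by FFT the cross-correlations of the node's red powers with its blue powers at all shifts. A node of size $m$ costs $O(p\,m\log m)$, giving $O(p\,n\log^2 n)$ total. Because the relevant shifts are bounded by the node size, the stored data at a node is $O(m)$ per $t$.

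A query decomposes the interval into $O(\log n)$ canonical pieces. The main obstacle is that, for a general window, the shift between the matched red and blue subsequences is not aligned with node boundaries. We need a decomposition in which, for each node, the matched partners lie in that node or a neighbouring one, so that the precomputed correlation at a known offset suffices. We would first try to exploit that an in-order matching of a balanced interval splits at prefix-balance points into alternately-signed runs. We handle these through prefix sums of per-node correlations indexed by the imbalance offset, and avoid recomputing powers per query by storing, per node, the $p+1$ correlation arrays. We also need to handle the sign change of $r-b$ (edges oriented either way): for odd $p$ this must be separated by the prefix-balance structure, whereas for even $p$ it is automatic.

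On space: storing all $p+1$ arrays naively costs $O(p\,n\log n)$. To meet $O(n\log n)$, we would instead store only the combined per-shift cost (the binomial sum already evaluated) at each node and shift, which is one number per shift. Finally, combining $O(n)$ queries of $O(\log n)$ time each with the $O(p\,n\log^2 n)$ preprocessing yields all $C_k$ within the stated bounds. Numerical precision of FFT for large $p$ is a side issue we would address by assuming exact arithmetic, i.e. the real-RAM model.
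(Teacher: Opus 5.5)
Your overall architecture matches the paper's: Hungarian-style augmentation with gains $\mathrm{cost}(I)-\mathrm{cost}_{M_k}(I)$, a range tree over the merged order, and binomial expansion plus FFT per node. However, the step that carries the whole result is left open: answering a balanced-interval query in $O(\log n)$ time from per-node data. You name it ``the main obstacle'' and only sketch what you ``would first try'', and those sketches do not work.
\begin{itemize}
\item Requiring that matched partners lie in the same or a neighbouring canonical piece is unattainable. In the in-order matching of $I$, a point in a small canonical piece near one end of $I$ can be matched to a point many pieces away.
\item Splitting $I$ at prefix-balance points can produce $\Theta(n)$ runs, so it does not give a logarithmic query.
\item ``Prefix sums indexed by the imbalance offset'' is never turned into a procedure.
\end{itemize}

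The missing idea is to charge each pair to the tree node at whose midline it crosses. For every node $S$ and relevant shift $d$, store the crossing sums $W(\mathrm{left}_S,\mathrm{right}_S,d)$ and $W(\mathrm{right}_S,\mathrm{left}_S,d)$ and the within-node sum $W(S,d)$. Then prove Lemma~\ref{lem:crucial}: if $I$ is balanced with shift $d$ and meets both children of $S$, every shift-$d$ pair crossing between the children lies in $I$. The argument is short. If $r_j\in\mathrm{left}_S$ had $j$ below the first red index of $I$, then $b_{j+d}$ would precede the first blue point of $I$, hence precede $I$. That is impossible for a point of $\mathrm{right}_S$, because $I$ contains the midline; the other end and the other orientation are symmetric. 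It follows that $w^p(I)$ is the sum of $W(S,d)$ over the canonical nodes plus the full crossing sums at the $O(\log n)$ split nodes on the two search paths, and no pair ever has to be located individually.

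The same organisation also settles the sign issue, which your precomputation leaves unresolved. You correlate all red powers of a node with all its blue powers. But pairs of a fixed shift inside one node can be oriented either way, so for odd $p$ (including $p=1$) these arrays give $\sum (r-b)^p$ rather than $\sum |r-b|^p$, and you defer the separation to the unspecified query procedure. In the paper the FFT is applied only to child-crossing pairs. There every red point of $\mathrm{left}_S$ precedes every blue point of $\mathrm{right}_S$ (and symmetrically), so the sign is fixed. $W(S,d)$ is then assembled bottom-up as $W(\mathrm{left}_S,d)+W(\mathrm{left}_S,\mathrm{right}_S,d)+W(\mathrm{right}_S,\mathrm{left}_S,d)+W(\mathrm{right}_S,d)$. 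Only these evaluated numbers are stored, which gives $O(n\log n)$ space and removes the tension in your proposal between keeping $p+1$ arrays and the space bound.

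Finally, your bookkeeping claims (laminar candidates, only $O(n)$ of them) are asserted rather than proved. The paper needs Lemma~\ref{lem:hungarianprop} and Lemma~\ref{lem:net_cost}: the optimal augmenting path joins an adjacent free pair $(b,r)$ and stays inside $I_{b,r}$, and $I_{b,r}$ contains at most one interval of $\mathcal{I}_M$. These give that each iteration creates at most one new candidate, whose gain costs two queries.
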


Similar to the approach of~\cite{DBLP:conf/iclr/ChapelT25}, our algorithm can be viewed as a one-dimensional specialization of the Hungarian algorithm.
It maintains a collection of at most $n$ disjoint balanced intervals, each
containing an equal number of red and blue points, that compactly represents
the current minimum-cost partial matching. Interestingly, the minimum-cost
$(k+1)$-matching can be obtained from the minimum-cost $k$-matching by modifying
only few intervals. Thus, although all partial matchings together
may contain $\Theta(n^2)$ distinct edges, their structure admits an
$O(n)$-space representation.

The main remaining task is to identify which intervals to modify. This reduces
to evaluating minimum matching costs inside balanced intervals, which is handled
by our main technical contribution, summarized by the following theorem.

\begin{theorem}\label{thm:main-balanced-interval-ds}
Let $B,R \subset \mathbb{R}$ be point sets of size $n$ each. For any
$p \in \mathbb{N}$, in $O(p\,n \log^2 n)$ time we can construct a
data structure of size $O(n \log n)$ that, given a balanced query interval $I$ (i.e., $|I\cap R|=|I\cap B|$) returns the $p$-Wasserstein cost of the minimum-cost matching between $I\cap R$ and $I\cap B$ in $O(\log n)$ time.
\end{theorem}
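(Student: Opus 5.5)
The plan rests on three observations: a structural fact about optimal matchings, a polynomial expansion of the cost, and an FFT-based precomputation over a hierarchy on the merged order.

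\emph{Structural fact.} For $p\ge 1$ the cost $|x-y|^p$ is convex. So in a balanced interval $I$ the minimum-cost perfect matching between $I\cap R$ and $I\cap B$ is the in-order matching: the $i$-th red point of $I$ is paired with the $i$-th blue point of $I$. I would prove this by a standard exchange (uncrossing) argument.

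\emph{Reduction to shifted sums.} Sort $R=\{r_1<\dots<r_n\}$ and $B=\{b_1<\dots<b_n\}$. Let $h(t)=\#\{\text{blue}\le t\}-\#\{\text{red}\le t\}$. For a balanced $I=[s,e]$ with red indices $a+1,\dots,a+m$, the optimal cost is
\[ \mathrm{cost}(I)=\sum_{j=a+1}^{a+m}|r_j-b_{j+d}|^p,\qquad d=h(s^-). \]
Thus a query is a windowed sum along one diagonal $d$ of the $n\times n$ array $|r_j-b_k|^p$. There are $\Theta(n)$ diagonals with $\Theta(n)$ entries each, so explicit prefix sums per diagonal are quadratic and must be avoided.

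\emph{Removing the absolute value.} If the sign of $r_j-b_{j+d}$ is fixed on a block of consecutive $j$, then on that block
\[ |r_j-b_{j+d}|^p=\pm\sum_{q=0}^{p}\tbinom{p}{q}r_j^{\,q}(-b_{j+d})^{p-q}. \]
Hence the block sum, for all shifts $d$ at once, is a combination of $p+1$ cross-correlations of the sequences $(r_j^q)$ and $(b_k^{p-q})$. Each can be computed by FFT in $O(s\log s)$ time for blocks of length $s$. This is the source of the factor $p$.

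\emph{Hierarchy on the merged order.} I would build a balanced binary hierarchy over the merged sorted order of the $2n$ points. At a node $v$ with split position $\mu$, a pair in the in-order matching of $I$ either lies entirely on one side of $\mu$ or crosses it. A crossing pair has its left endpoint left of $\mu$ and its right endpoint right of $\mu$, so the sign is determined by which endpoint is red.

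Because the in-order matching is monotone, the crossing pairs at $\mu$ form a contiguous run of red indices. The run length equals $|h(\mu)-h(s^-)|$, and these pairs are all of one orientation. Consequently the relevant red indices lie within distance $|v|$ of $\mu$ and the relevant shifts $d$ lie in a window of size $O(|v|)$. For each node I precompute, by FFT, the sums over crossing runs for every admissible $(d,\text{length})$, stored as prefix sums along each admissible diagonal restricted to the node. The intent is that this takes $O(p\,|v|\log|v|)$ time and $O(|v|)$ space per node. Summing over the $O(\log n)$ levels gives $O(p\,n\log^2 n)$ preprocessing and $O(n\log n)$ space.

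\emph{Answering a query.} Let $v$ be the lowest node whose range contains $I$. The matching of $I$ splits into the crossing run at $v$'s split and two one-sided remainders. The left remainder ends at a position where the height equals $h(s^-)$, so it is balanced, and the same holds for the right remainder. Each remainder is a prefix or suffix of a child node. I would recurse only along the two root-to-leaf paths bounding $I$, with $O(1)$ table lookups per level, together with precomputed prefix/suffix costs of balanced prefixes inside each node. This gives $O(\log n)$ query time.

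\emph{Main obstacle.} The hardest step is making the per-node tables genuinely $O(|v|)$-sized and FFT-computable. A crossing run depends on two parameters: the shift $d$ and how far the run extends. Storing all pairs would be quadratic in $|v|$. My first attempt is to argue that, for a run anchored at the split $\mu$, the run length is determined by $d$ together with $h(\mu)$. Then each admissible $d$ contributes a single value, which is exactly one coefficient of the correlation computed by FFT. If that argument fails, I would fall back to an additional $O(\log n)$-level decomposition of the run, at the cost of a worse query time.
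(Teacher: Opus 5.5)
Your architecture matches the paper's: a range tree over the merged order, in-order matchings with a single shift $d$, and per-node cross terms computed from the binomial expansion with $p+1$ FFT correlations. Your ``first attempt'' at the main obstacle is the paper's key lemma (Lemma~\ref{lem:crucial}), and it is true, so commit to it and drop the fallback. Write $\rho(t),\beta(t)$ for the number of red and blue points among the first $t$ positions of the merged order. The shift-$d$ pairs $(r_j,b_{j+d})$ with $r_j$ at or left of $\mu$ and $b_{j+d}$ right of $\mu$ are exactly those with $\beta(\mu)-d<j\le\rho(\mu)$, and this does not depend on $I$. Now take $I=[s,e]$ with $s\le\mu<e$ and $d=\beta(s-1)-\rho(s-1)$. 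Then $\beta(\mu)\ge\beta(s-1)$ and $\rho(\mu)\le\rho(e)$ place every such $j$ in $[\rho(s-1)+1,\rho(e)]$. So every such pair belongs to the matching of $I$; blue-left pairs are handled symmetrically, using balancedness. Hence one number per (node, shift) suffices, and no $(d,\text{length})$ table is needed.

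The genuine gap is in your query step. The claim that the one-sided remainders are balanced is false. Let $I$ consist of $b_1<b_2<r_1<r_2$, with the split between $r_1$ and $r_2$. The crossing run is $\{(r_2,b_2)\}$ and the left remainder is $\{(r_1,b_1)\}$, which straddles $b_2$. Here $I\cap\mathrm{left}_v=\{b_1,b_2,r_1\}$ is unbalanced, and no nonempty balanced prefix of $I$ ends before the split. So the remainders are not matching costs of balanced intervals, and ``precomputed costs of balanced prefixes/suffixes'' is the wrong object; you also give no way to compute it within the time bound.

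What you need is the fixed-shift quantity $\mathrm{Cost}(I\cap S,d)$: the total cost of shift-$d$ pairs lying inside $I\cap S$, with $d$ inherited from $I$. You evaluate it by recursing along both boundary paths of $I$:
\begin{itemize}
\item at every node whose split $I$ straddles, add both full node-restricted cross terms, justified by the argument above applied to that node's split, not only at $v$;
\item for every maximal node $S\subseteq I$, add the total $W(S,d)$ of all shift-$d$ pairs inside $S$.
\end{itemize}
So your tables must also contain these full-node totals for every relevant shift. They are computed bottom-up as $W(S,d)=W(\mathrm{left}_S,d)+W(\mathrm{right}_S,d)+W(\mathrm{left}_S,\mathrm{right}_S,d)+W(\mathrm{right}_S,\mathrm{left}_S,d)$ in $O(|S|)$ time per node. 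With this change your plan becomes the paper's proof.
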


%\Cref{thm:main-balanced-interval-ds} relies on the following
%The optimal matching inside any balanced interval always pairs the left-most red point with the left-most blue point.
%Thus, if the first red point in the interval has index $k$ and the first blue point has index $k+t$, then every edge in the optimal matching inside that interval has shift $t$ (i.e., the indices of the matched endpoints differ by exactly $t$).
The optimal matching inside any balanced interval pairs red and blue points in
order. Hence, if the first red point in the interval has index $k$ and the first
blue point has index $k+t$, then every matched pair has the same \emph{shift}
$t$.
The data structure of Theorem~\ref{thm:main-balanced-interval-ds} exploits this fact
with a one-dimensional range tree. Each node corresponding to a contiguous range of points $S\subset R\cup B$ stores
matching-cost information for all $O(|S|)$ relevant shifts between red and blue
points in $S$. Given a balanced query interval, we determine its shift $t$ from
its leftmost red and blue points, and recover its optimal matching cost by
aggregating the precomputed shift-$t$ values over the $O(\log n)$ maximal range-tree
nodes into which the interval decomposes.
This aggregation is nontrivial: matching costs are not generally decomposable
over disjoint subproblems. Here, decomposability follows from the
one-dimensional order structure together with balancedness; see
Lemma~\ref{lem:crucial}. The remaining challenge is preprocessing all
shift-values without quadratic work. We encode the costs for all shifts at a
node as coefficients of carefully chosen polynomials; expanding $|a-b|^p$ and
using FFT computes these values efficiently. Summed over the range tree, this
gives $O(p\,n\log^2 n)$ preprocessing time and $O(\log n)$ query time.

{\bf Implementation.}
%We complement our theoretical results with an open-source \texttt{C++}-implementation of our algorithm. To assess its practical performance, we
%compare it against the state-of-the-art implementation of
%\cite{DBLP:conf/iclr/ChapelT25} on a broad collection of synthetic instances.
%%Our experiments show that our implementation is on par, and often consistently faster across the
%entire tested regime. This speedup persists even on instances that are
%particularly favorable to the baseline, where the balanced intervals considered
%during the algorithm tend to be small. Moreover, as the problem size grows, our
%implementation exhibits substantially better scaling behavior, demonstrating
%that the theoretical improvements translate into clear practical gains.
We complement our theoretical results with an open-source \texttt{C++} implementation.
% Compared with the state-of-the-art implementation of \cite{DBLP:conf/iclr/ChapelT25} on a broad collection of synthetic instances, our implementation is competitive throughout and often substantially faster, with increasingly bigger speedups as the input size grows.
Compared with the state-of-the-art implementation of \cite{DBLP:conf/iclr/ChapelT25} on a broad collection of synthetic instances, our implementation is competitive throughout and orders of magnitude faster on most large inputs, with increasingly bigger speedups as the input size grows.

{\bf Lower Bounds.}
%Our upper bound applies to every finite $p \in \mathbb{N}$, but it does not
%extend to the limiting case $p=\infty$. For this case, we prove a conditional
%lower bound: assuming the $(\min,+)$-Convolution
%Conjecture~\cite{DBLP:journals/talg/CyganMWW19}, no truly subquadratic-time
%algorithm can compute all $k$-partial $\infty$-Wasserstein matching costs.
%This establishes a separation between the computational complexity for full transport plan and all optimal partial transports. Indeed,
%full $\infty$-Wasserstein transport admits subquadratic exact algorithms already
%in two dimensions, whereas computing the entire partial transport profile is
%conditionally quadratic-hard even in the highly restricted one-dimensional
%setting.
Our upper bound applies to every finite $p \in \mathbb{N}$, but not to
$p=\infty$. For this case, assuming the $(\min,+)$-Convolution
Conjecture~\cite{DBLP:journals/talg/CyganMWW19}, we prove that computing all
$k$-partial $\infty$-Wasserstein matching costs cannot be performed in truly subquadratic time.
This separates full transport from the OT-profile: full
$\infty$-Wasserstein transport admits subquadratic exact algorithms even in
two dimensions, whereas the complete partial transport profile is hard already in one dimension.

\begin{restatable}[]{theorem}{minpluslowerbound}\label{thm:firstLowerbound}
Assume the \((\min, +)\)-Convolution Conjecture and consider \(B\) and \(R\) to be sets of \(n\) integer points in \([-W, W]\).
Then, there is no algorithm that takes \(B\) and \(R\) as input and outputs all \(k\)-partial \(\infty\)-Wasserstein matching costs in time \(O(n^{2-\varepsilon}\polylog W)\) for any \(\varepsilon>0\).
\end{restatable}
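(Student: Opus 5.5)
The plan is a direct reduction from $(\min,+)$-convolution, using the configuration from the introduction in which all blue points lie to the left of all red points. Suppose $b_1<\dots<b_n<r_1<\dots<r_n$. For $p=\infty$ the cost of a matching is its longest edge, and $C_k$ is the minimum of this over $k$-matchings. I first argue that some optimal $k$-matching pairs the $k$ rightmost blue points with the $k$ leftmost red points, in order. The first part is an exchange argument: replacing a matched blue point by an unmatched blue point further right, or a matched red point by an unmatched red point further left, shortens that edge. The second part is the standard uncrossing argument for convex costs, which also holds for the bottleneck cost: for $b<b'<r<r'$ we have $\max(r-b,\,r'-b')\le \max(r'-b,\,r-b')=r'-b$. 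Hence
\[
C_k=\max_{1\le i\le k}\,(r_i-b_{n-k+i}).
\]
Substituting $j=k+1-i$ and $c_j:=-b_{n+1-j}$ turns this into $C_k=\max_{i+j=k+1}(r_i+c_j)$. So the profile is exactly a $(\max,+)$-convolution of the two sequences $(r_i)$ and $(c_j)$, both of which are strictly increasing.

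Next I encode an arbitrary $(\min,+)$-convolution instance. Let $x,y\in\{-W,\dots,W\}^n$ be given; we want $z_m=\min_{i+j=m}(x_i+y_j)$, which equals $-\max_{i+j=m}(-x_i-y_j)$. Since the sequences must be monotone, I add a linear term. Take $L=2W+1$ and set
\[
r_i=-x_i+iL+S,\qquad c_j=-y_j+jL.
\]
The choice $L>2W$ makes both sequences strictly increasing.
\begin{itemize}
\item The blue points are $b_{n+1-j}=-c_j$. These are strictly increasing in their index and lie in $[-nL-W,\,W]$.
\item Choosing $S=2nL+3W$, say, ensures that every red point exceeds every blue point, so the assumed order $b_n<r_1$ holds.
\end{itemize}
All coordinates are integers of magnitude $O(nW)$. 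For $i+j=k+1$ the linear terms sum to the same value $(k+1)L$, so
\[
C_k=(k+1)L+S-\min_{i+j=k+1}(x_i+y_j)=(k+1)L+S-z_{k+1}.
\]
This recovers $z_2,\dots,z_{n+1}$, i.e.\ all entries of the convolution that have index sum at most $n+1$.

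To obtain the remaining entries, index sums $n+2,\dots,2n$, I apply the same reduction to the reversed sequences $x'_i=x_{n+1-i}$ and $y'_j=y_{n+1-j}$. Their entry with index sum $m$ is $z_{2n+2-m}$. Alternatively, one can pad $x$ and $y$ to length $2n$ with entries equal to $+3W$; this does not change any minimum over genuine pairs, since padded pairs cost more. Building the instance takes $O(n)$ time, and $\polylog(nW)=\polylog W$ up to $\polylog n$ factors. An $O(n^{2-\varepsilon}\polylog W)$ algorithm for the profile would therefore compute $(\min,+)$-convolution in $O(n^{2-\varepsilon}\polylog W)$ time, contradicting the conjecture.

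I expect the main obstacle to be making the structural claim about $C_k$ airtight for the bottleneck cost. Ties between equal-length edges and the interaction between choosing the subset and choosing the order both need care; I would write the exchange and uncrossing steps explicitly, so that they show some optimal $k$-matching has the claimed form even when other optimal $k$-matchings exist. Everything else is bookkeeping on the offsets $L$ and $S$ and on the index ranges.
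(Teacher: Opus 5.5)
Your proposal is correct and takes essentially the paper's route: linear offsets force monotonicity, one color lies entirely to the left of the other, and the in-order matching of the $k$ innermost points turns the bottleneck profile into a max-type convolution. The paper routes this through an intermediate lemma reducing to $(\max,-)$-convolution of an increasing $X$ and a decreasing $Y$, which is your $(\max,+)$-convolution of $(r_i)$ and $(c_j)$ in another form. Beyond the paper, you justify the optimal-matching structure for the bottleneck cost, which the paper only asserts, and you recover all index sums by reversal; this extra step is not needed under the paper's definition of $(\min,+)$-convolution, which asks only for the first $n$ entries, so one call suffices there.
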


In \Cref{sec:omitted_lbs}, we give two additional lower bounds for a related data-structure
setting: given $p \in \mathbb{N}$ and point sets $B,R\subset\mathbb{R}$, answer
queries that for intervals $I_B,I_R\subset\mathbb{R}$ with
$|B\cap I_B|=|R\cap I_R|$ return the $p$-Wasserstein distance between
$B\cap I_B$ and $R\cap I_R$. Interestingly, this setting is harder than for a single query interval, unless an almost-quadratic matrix multiplication algorithm exists.
%In \Cref{sec:omitted_lbs} we present two more lower bounds for a very related setting, in which we the goal is the construction of a data structure that is given $p \in \mathbb{N}$ and point sets $B, R \subset \mathbb{R}$, and must answer queries of the following form: 
%Given two intervals $I_B, I_R \subset \mathbb{R}$ with $|B \cap I_B| = |R \cap I_R|$, return the $p$-Wasserstein distance between $B \cap I_B$ and $R \cap I_R$. Surprisingly, this setting is provably harder.
%\todo{The part from here on seems to have too much space considering that it is entirely in the appendix.}

\section{Data Structure for Balanced Interval Matching}
\label{sec:data-structure}

In this section we describe the data structure from Theorem~\ref{thm:main-balanced-interval-ds} supporting queries of the form: Given a balanced interval $I$, i.e., an interval such
that $|B \cap I| = |R \cap I|$, return the cost of the minimum-cost complete matching between $B \cap I$ and $R \cap I$. We denote this value by $w^p(I)$.
\newstuff{This data structure constitutes our main technical contribution and allows us, using the algorithm described in \Cref{sec:algorithm}, to efficiently compute minimum-cost $k$-matchings for all $k = 1, ..., n$ and $p \geq 1$}.

%\begin{theorem}\label{thm:balanced-interval-ds}
%Let $B,R \subset \mathbb{R}$ be point sets of size $n$ each, and let
%$p \in \mathbb{N}$ be fixed. In $O(n \log^2 n)$ time we can construct a
%data structure of size $O(n \log n)$ that returns $w^p(I)$ for any
%balanced interval $I$ in $O(\log n)$ time.
%\end{theorem}

%Section~\ref{sec:tree-structure} describes the details of our data structure.
%%Section~\ref{sec:query} presents the query algorithm and its correctness.
%Section~\ref{sec:preprocessing} describes the preprocessing algorithm.

%\subsection{Our Data Structure}

{\bf The overall structure.}
Let $r_1,\dots,r_n$ and $b_1,\dots,b_n$ be the red and blue points in non-decreasing order, and let $p_1,\dots,p_{2n}$ denote the merged sorted order of all points. Our data structure is a standard binary range tree $\mathcal{T}$ over the index range $[1,2n]$. Each node $S=[a,b]$ has midpoint $\mathsf{mid}_S=\lfloor(a+b)/2\rfloor$, left child $\leftChild{S}=[a,\mathsf{mid}_S]$, and right child $\rightChild{S}=[\mathsf{mid}_S+1,b]$. A node $S=[a,b]$ corresponds to contiguous ranges $r_j,\ldots,r_{j'}$ and $b_k,\ldots,b_{k'}$ of red and blue points.
To keep notation lightweight, we also write $r,b \in S$ for $r \in R$ and $b \in B$ to denote that their global indices in the merged order of the points lie in the index range $S$.

For a pair $(r_j,b_k)$, we call $k-j$ its \emph{shift}. Thus, the shifts of pairs contained in $S$ lie in $[\,k-j',\,k'-j\,]$, which we call the range of \emph{relevant shifts} for $S$. For a nonempty balanced interval $I$ with leftmost red and blue points $r_j$ and $b_k$, the \emph{shift of $I$} is $k-j$; since the optimal matching in $I$ matches its red and blue points in order, all its pairs have this shift. Refer to \Cref{fig:relevantshift}.

We augment each node $S=[a,b]$ of the tree with $O(|S|)$ values: For each relevant shift $d$, we store
\begin{compactitem}
%     \item $W(\leftChild{S},\rightChild{S},d)=\sum_{(r_j,b_{j+d})\in \leftChild{S}\times \rightChild{S}}|r_j-b_{j+d}|^p$, the total cost of all pairs $(r_j,b_{j+d})$ at shift $d$ with $r_j$ in $\leftChild{S}$ and $b_{j+d}$ in $\rightChild{S}$;
%     \item $W(\rightChild{S},\leftChild{S},d)=\sum_{(r_j,b_{j+d})\in \rightChild{S}\times \leftChild{S}}|r_j-b_{j+d}|^p$, the total cost of all pairs $(r_j,b_{j+d})$ at shift $d$ with $r_j$ in $\rightChild{S}$ and $b_{j+d}$ in $\leftChild{S}$;
%     \item $W(S,d)=\sum_{(r_j,b_{j+d})\in S\times S}|r_j-b_{j+d}|^p$ the total cost of pairs contained in $S$ at shift $d$.
     \item the total cost of all pairs $(r_j,b_{j+d})$ at shift $d$ with $r_j$ in $\leftChild{S}$ and $b_{j+d}$ in $\rightChild{S}$: $W(\leftChild{S},\rightChild{S},d) \coloneqq \sum_{(r_j,b_{j+d})\in \leftChild{S}\times \rightChild{S}}|r_j-b_{j+d}|^p$, and
     \item the total cost of all pairs $(r_j,b_{j+d})$ at shift $d$ with $r_j$ in $\rightChild{S}$ and $b_{j+d}$ in $\leftChild{S}$: $W(\rightChild{S},\leftChild{S},d) \coloneqq \sum_{(r_j,b_{j+d})\in \rightChild{S}\times \leftChild{S}}|r_j-b_{j+d}|^p$, and
     \item the total cost of pairs contained in $S$ at shift $d$: $W(S,d) \coloneqq \sum_{(r_j,b_{j+d})\in S\times S}|r_j-b_{j+d}|^p$.
\end{compactitem}
For shifts $d$ that are not relevant for $S$, the values are defined to be zero, as for any pair $(r_j,b_{j+d})$ at shift $d$ not both $r_j$ and $b_{j+d}$ can be in $S$. These values we store implicitly. As an aside, we remark that for any shift $d$ at least one of $W(\leftChild{S},\rightChild{S},d)$ and $W(\rightChild{S},\leftChild{S},d)$ is zero.
As each node $S$ stores $O(|S|)$ values, the total size of the data structure is in $O(n\log n)$.

We now first show how we can answer the desired queries using this data structure to then show how we can actually build the data structure with the correct costs.

\begin{figure}
\centering
\includegraphics[width=0.45\textwidth]{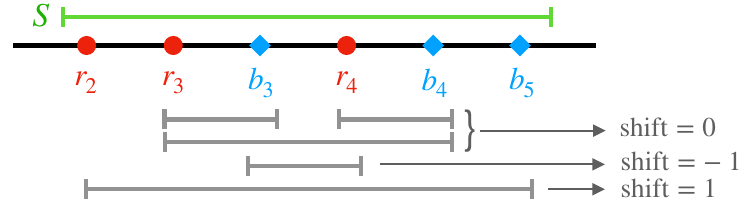}
\caption{Balanced intervals in $S$ and their shift values. The range of relevant shifts is $[-1,3]$.%\todo[inline]{Figure not referenced in text}
}
\label{fig:relevantshift}
\end{figure}

{\bf Answering Queries.} %\todo{This descrption currently feels a bit lack luster}
%\todo{Somewhere (here?) it should be made more clear that we first determine the shift to then compute the cost with that shift.}
Let $I \subseteq [1,2n]$ be a \newstuff{nonempty} balanced interval, and let $d$ be the shift of its optimal matching.\footnote{\newstuff{As a preliminary step in the query algorithm, $d$ must be determined. This can be done easily by maintaining a binary search tree over each of the red and blue points, then given $I$, querying said binary search tree to determine the leftmost red point $r_j$ in $I$ and leftmost blue point $b_k$ in $I$, after which we can simply take $d = k - j$.}} 
%Querying the data structure for $w^p(I)$ now consists of 
For any node $S \in \mathcal{T}$, we define the cost $\mathrm{Cost}(I\cap S, d)$ of all pairs in $S\cap I$ at shift $d$ via the %procedure that computes the contribution at shift $d$ of all
%red-blue pairs contained in $I \cap S$. This quantity is computed
%recursively using the precomputed tables $W(\cdot,\cdot)$ stored at each
%node.
following recursive procedure (refer to \Cref{fig:decomposition}):

\begin{compactitem}
    \item If $S \subseteq I$, return $W(S,d)$.
    \item If $I$ intersects both children of $S$, return
    \[W(\leftChild{S},\rightChild{S},d)
        + W(\rightChild{S},\leftChild{S},d)
        + \mathrm{Cost}(I\cap \leftChild{S},d)
        + \mathrm{Cost}(I\cap \rightChild{S},d).\]
    \item If $I\cap S \subseteq \leftChild{S}$, return $\mathrm{Cost}(I\cap \leftChild{S}, d)$.
    \item If $I\cap S \subseteq \rightChild{S}$, return $\mathrm{Cost}(I\cap \rightChild{S}, d)$.
\end{compactitem}
To answer a query, we compute $\mathrm{Cost}(I\cap S_0,d)$, where $S_0=[1,2n]$ is the root of the tree $\mathcal{T}$.
We now prove that this recursive definition indeed results in the values that we want to report.

\begin{lemma}
\label{lem:crucial}
    Let $I\subseteq[1,2n]$ be a nonempty balanced interval, and let $d$ be the shift of its optimal matching. Then $\mathrm{Cost}(I\cap S_0,d)=w^p(I)$, where $S_0=[1,2n]$ is the root of $\mathcal{T}$.
\end{lemma}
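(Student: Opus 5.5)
We prove the claim in two parts. First we identify the right-hand side as a sum over a fixed set of pairs. Then we show by induction over the tree that the recursion computes exactly that sum.

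\textbf{Step 1: $w^p(I)$ is a sum over shift-$d$ pairs.} Let $r_j,\dots,r_{j+m-1}$ and $b_k,\dots,b_{k+m-1}$ be the red and blue points in $I$, with $d=k-j$. These are contiguous in the respective sorted orders because $I$ is an interval of the merged order. By the in-order property of optimal matchings on the line, $w^p(I)=\sum_{i=0}^{m-1}|r_{j+i}-b_{j+i+d}|^p$. We claim this equals
\[
P(I,d)\coloneqq\sum_{(r_i,b_{i+d})\,:\,r_i\in I,\ b_{i+d}\in I}|r_i-b_{i+d}|^p .
\]
Indeed, a pair $(r_i,b_{i+d})$ with $r_i\in I$ has $j\le i\le j+m-1$, so $b_{i+d}$ has index in $[k,k+m-1]$ and also lies in $I$. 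Hence the pairs at shift $d$ with both endpoints in $I$ are exactly the in-order matching, and balancedness is what makes this work.

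\textbf{Step 2: the recursion computes $P(I\cap S,d)$.} For a node $S$, let $P(I\cap S,d)$ denote the sum of $|r_i-b_{i+d}|^p$ over pairs with both endpoints in $I\cap S$. We prove by induction on the height of $S$ that $\mathrm{Cost}(I\cap S,d)=P(I\cap S,d)$ whenever $I\cap S\neq\emptyset$. The root case together with Step 1 then gives the lemma. The cases follow the recursion.

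\emph{Case $S\subseteq I$.} Then $P(I\cap S,d)=W(S,d)$ by definition. This holds even if $d$ is not relevant for $S$, since then both sides are zero.

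\emph{Case $I\cap S$ lies in one child.} Every pair with both endpoints in $I\cap S$ lies inside that child, so induction applies directly.

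\emph{Case $I$ meets both children, with $S\not\subseteq I$.} Partition the pairs with both endpoints in $I\cap S$ into three groups: both endpoints in the left child, both in the right child, and pairs that cross. The first two groups are handled by induction. This case is the main obstacle: we must show the crossing pairs contribute exactly $W(\leftChild{S},\rightChild{S},d)+W(\rightChild{S},\leftChild{S},d)$. That is, every crossing shift-$d$ pair in $S\times S$ must already have both endpoints in $I$, even though $I$ need not cover $S$.

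Here the interval structure is used. Since $I$ meets both children and is an interval, $I\cap S=[x,y]$ with $x\le\mathsf{mid}_S<y$. Take a crossing pair $(r_i,b_{i+d})$ with $r_i\in\leftChild{S}$ and $b_{i+d}\in\rightChild{S}$; the other orientation is symmetric. We must show $r_i\ge x$ and $b_{i+d}\le y$.

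Suppose instead that $r_i<x$. Since $I$ extends to index $\mathsf{mid}_S+1>x$, the red and blue points of $I$ preceding the split, together with the fact that $b_{i+d}$ lies after the split, force the following. Because shift $d$ is monotone ($i\mapsto i+d$), $r_i$ lies left of $I$'s leftmost red point $r_j$, so $i<j$. Then $b_{i+d}$ has index $<k$, so it lies left of $I$'s leftmost blue point $b_k$, and hence left of $x$ in the merged order. This contradicts $b_{i+d}\in\rightChild{S}$, which lies to the right of $x$.

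One subtlety must be handled: $I$ might contain no red points, so that $r_j$ is undefined. A nonempty balanced interval contains at least one point of each color, so this cannot happen. Symmetrically, $b_{i+d}>y$ would force $i+d>k+m-1$, so $i>j+m-1$, placing $r_i$ right of $I$ and thus right of $\mathsf{mid}_S$, a contradiction. Hence all crossing shift-$d$ pairs of $S$ lie in $I$, the three contributions add up to $P(I\cap S,d)$, and the induction closes.
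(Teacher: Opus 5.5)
Your proof is correct and follows essentially the same approach as the paper. The key step is the same: when $I$ meets both children, every crossing shift-$d$ pair lies in $I$, shown by the same contradiction from index monotonicity ($i<j \Rightarrow i+d<k$). You also make explicit two points the paper only sketches: the induction over tree height, and the fact that the shift-$d$ pairs inside $I$ are exactly the in-order matching.
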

\begin{proof}
We show that every pair of shift $d$ contributing to the optimal matching inside $I$ is counted exactly once by the recursion, and that no other pair is counted.

The only point requiring justification is the use of the full terms $W(\leftChild{S},\rightChild{S},d)$ and $W(\rightChild{S},\leftChild{S},d)$ when $I$ intersects both children of a node $S$. We claim that in this case every shift-$d$ pair crossing between $\leftChild{S}$ and $\rightChild{S}$ is fully contained in $I$.
Indeed, suppose \(R\cap I=\{r_\alpha,\ldots,r_{\alpha+m-1}\}\) and \(B\cap I=\{b_\beta,\ldots,b_{\beta+m-1}\}\) for some $\alpha$, $\beta$, and $m$. Since $I$ is balanced, the optimal matching inside $I$ matches points in order, so its shift is $d=\beta-\alpha$. Now consider a shift-$d$ pair $(r_j,b_{j+d})$ with $r_j\in \leftChild{S}$ and $b_{j+d}\in \rightChild{S}$. If $j<\alpha$, then $j+d<\beta$, so $b_{j+d}$ lies before the first blue point of $I$ and hence before $I$ in the merged order. This contradicts $b_{j+d}\in \rightChild{S}$, since $I$ intersects both children of $S$. Similarly, if $j>\alpha+m-1$, then $r_j$ lies after the last red point of $I$ and hence after $I$ in the merged order, contradicting $r_j\in \leftChild{S}$. Hence $\alpha\le j\le \alpha+m-1$, and therefore also $\beta\le j+d\le \beta+m-1$. 
Thus both points of the pair $(r_j, b_{j+d})$ lie in $I$.
The case of a pair from $\rightChild{S}$ to $\leftChild{S}$ is symmetric.

Now consider the recursion at a node $S$. If $S\subseteq I$, the procedure returns $W(S,d)$, which is exactly the total cost of all shift-$d$ pairs contained in $S$. If $I$ intersects both children of $S$, then every shift-$d$ pair in $I\cap S$ is of exactly one of the following four types: contained in $\leftChild{S}$, contained in $\rightChild{S}$, crossing from $\leftChild{S}$ to $\rightChild{S}$, or crossing from $\rightChild{S}$ to $\leftChild{S}$. The first two types are handled recursively, and the last two are counted by $W(\leftChild{S},\rightChild{S},d)$ and $W(\rightChild{S},\leftChild{S},d)$. By the arguments above, these cross terms count no pairs outside $I$. If $I$ intersects only one child, the recursion only descends into that child, so again no pair is lost or counted twice.

Thus, every shift-$d$ pair contained in $I$ is counted exactly once,and no shift-$d$ pair outside $I$ is counted. Since $d$ is the shift of the optimal matching inside the balanced interval $I$, these are precisely the pairs of that matching. Finally, as $I\subseteq S_0=[1,2n]$, we have 
\(\mathrm{Cost}(I\cap S_0,d)=w^p(I)\).
\end{proof}

Since the recursion follows the standard range-query traversal in a binary tree over $[1,2n]$, it has at most two active root-to-leaf search paths. All other visited nodes are fully contained in $I$ and terminate immediately. Hence the recursion visits $O(\log n)$ nodes and spends $O(1)$ time at each node. Thus, the query time is $O(\log n)$.

\begin{figure}
    \centering
    \includegraphics[width=\textwidth]{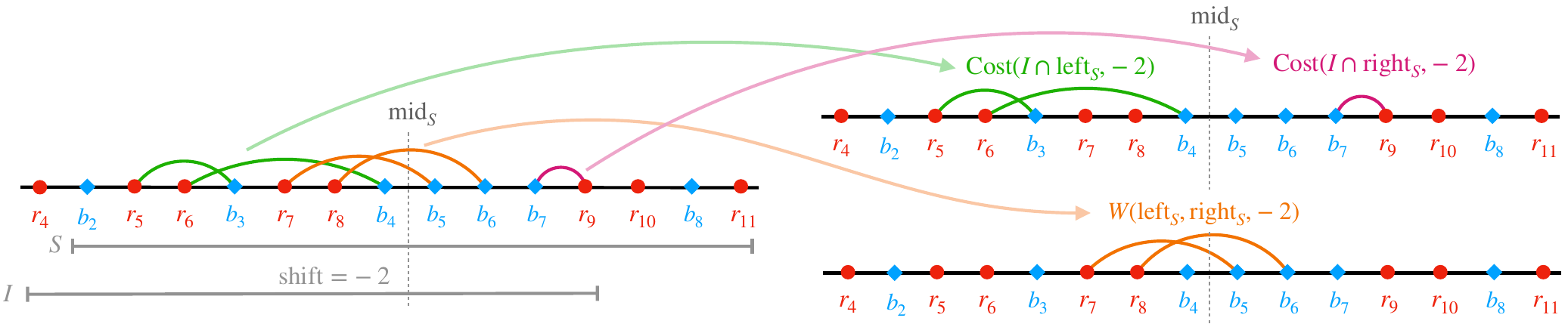}
    \caption{The optimal matching of $I$ has shift $-2$. We decompose the edges of the optimal matching for $I$ that are contained in $S\cap I$ into $3$ disjoint matchings of cost $\mathrm{Cost}(I \cap \mathrm{left}_S,d)$, $\mathrm{Cost}(I \cap \mathrm{right}_S,d)$, and $W(\mathrm{left}_S,\mathrm{right}_S,d)$. Note that $W(\mathrm{right}_S,\mathrm{left}_S,-2) =0$. 
    }
    \label{fig:decomposition}
\end{figure}

{\bf Constructing the data structure.}
%\label{sec:preprocessing}
%The preprocessing step computes all values
%$W(\mathrm{left}_S,\mathrm{right}_S,d)$,
%$W(\mathrm{right}_S,\mathrm{left}_S,d)$, and $W(S,d)$ for every node
%$S \in \mathcal{T}$. The two cross terms are symmetric, so we describe
%only the computation of $W(\mathrm{left}_S,\mathrm{right}_S,d)$.
Let $S = [a,b]$ be a node of the range tree. Suppose the red points in $\mathrm{left}_S$ are $r_u,\dots,r_v$ in the global red order, and the blue points in $\mathrm{right}_S$ are $b_s,\dots,b_t$ in the global blue order. For any shift $d$, the value $W(\mathrm{left}_S,\mathrm{right}_S,d)$ is the sum of $|b_{j+d} - r_j|^p$ over all $j$ with $j \in [u,v]$ and $j+d \in [s,t]$. Since all red points in $\mathrm{left}_S$ lie to the left of all blue points in $\mathrm{right}_S$, we always have $b_{j+d} > r_j$, so $|b_{j+d} - r_j|^p = (b_{j+d} - r_j)^p$.
%We compute these values simultaneously for all relevant shifts $d$ by
%expanding each term using the binomial identity
Expanding $(b_{j+d} - r_j)^p$, we obtain \((b_{j+d} - r_j)^p = \sum_{c=0}^p \binom{p}{c} (-1)^c r_j^c\, b_{j+d}^{\,p-c}\), and hence 
\[W(\mathrm{left}_S,\mathrm{right}_S,d)=\sum_{\substack{j\in[u,v]\\ j+d\in[s,t]}}(b_{j+d} - r_j)^p=\sum_{c=0}^p \binom{p}{c}(-1)^c\left(\sum_{\substack{j\in[u,v]\\ j+d\in[s,t]}}r_j^c\, b_{j+d}^{\,p-c}\right).\]
Thus, to obtain $W(\mathrm{left}_S,\mathrm{right}_S,d)$, it suffices to compute, for each $c \in \{0,\dots,p\}$, the sums
\[\sum_{\substack{j\in[u,v],\, k\in[s,t]\\ k-j=d}} r_j^c\, b_k^{p-c}\]
for all relevant shifts $d$. These sums are entries of a convolution: for $A=(r_u^c,r_{u+1}^c,\ldots,r_v^c)$ and $B=(b_t^{p-c},b_{t-1}^{p-c},\ldots,b_s^{p-c})$, the sum above is the entry $(A*B)[t-u-d]$, where $*$ denotes the discrete convolution. Hence, for all relevant shifts $d$ these values can be computed simultaneously by one convolution in $O((b-a)\log(b-a))$ time via FFT \cite{oppenheim1999discrete,cooley1965algorithm}. Repeating this for every $c=0,\ldots,p$, the values $W(\mathrm{left}_S,\mathrm{right}_S,d)$ for all relevant shifts are computed in $O(p(b-a)\log(b-a))$ time. For more details, refer to \Cref{appendix:fft}. We compute $W(\rightChild{S},\leftChild{S},d)$ symmetrically.
%Since the right-left computation is
As the sum of the lengths of all node ranges in $\mathcal{T}$ is $O(n\log n)$, the total construction time is $O(pn\log^2 n)$.

Finally, the values $W(S,d)$ are computed bottom-up using the recurrence
\[ W(S,d)
    = W(\mathrm{left}_S,d)
    + W(\mathrm{left}_S,\mathrm{right}_S,d)
    + W(\mathrm{right}_S,\mathrm{left}_S,d)
    + W(\mathrm{right}_S,d),\]
which requires $O(|S|)$ time per node $S$ and $O(n\log n)$ time overall, implying \Cref{thm:main-balanced-interval-ds}. %This
%completes the preprocessing stage.

\ignore{
\section{Data Structure for Balanced Interval Matching}
\label{sec:data-structure}

\newcommand{\lc}{\text{left}}
\newcommand{\rc}{\text{right}}
\newcommand{\midline}{\text{mid}}

\sa{We might want to restate the problem we try to solve here}
Our data structure is a variant of a \emph{range tree} over the points. That is, consider the points in $R$ to be $r_1, \ldots, r_n$ in nondecreasing order, consider the points in $B$ to be $b_1, \ldots, b_n$ in nondecreasing order, and additionally consider all points $R \cup B$ to be $p_1, \ldots, p_{2n}$ in nondecreasing order (so that the relative order of the blue points in $p_1, \ldots, p_{2n}$ is exactly $b_1, \ldots, b_n$, and similarly for the red points).
Our range tree then consists of a set $\mathcal S$ of ranges of consecutive integers in $[1, 2n]$ \sa{we might want to make it more clear that we mean the indices of integers, not their values} with the following additional structure: \sa{range trees should be known, so imo we can leave this out (but please have someone else decide this}
\begin{itemize}
    \item The ranges containing a single integer are \emph{leaves}, while others are \emph{nonleaves}. For every nonleaf range $S = [a, b] \in \mathcal S$, we have a midline $\midline_S = \lfloor \frac {a + b} 2\rfloor$, such that $S$ is partitioned by the midline into left and right children $\mathrm{left}_S$ and $\mathrm{right}_S$ lying in $\mathcal S$. In other words, $\mathrm{left}_S = [a, \midline_S] \in \mathcal S$ and $\mathrm{right}_S = [\midline_S + 1, b] \in \mathcal S$.
    \item One range $S_0 \in \mathcal S$ is known as the \emph{root} and is simply equal to all of $[1, 2n]$. All other ranges are either the left or right child of some other range, and every left and right child of a nonleaf range in $\mathcal S$ is also in $\mathcal S$.
\end{itemize}
We will then store certain information in the range tree pertaining to each range. To explain what this information is, we introduce the concept of \emph{shifts}.
\begin{definition}
    A pair $(r_j, b_k)$ consisting of a red point $r_j$ and a blue point $b_k$ is said to lie at \emph{shift} $d = k - j$.
\end{definition}
The key relevance of this concept of shifts is that in a perfect matching of a balanced interval $I$, all pairs contained in the matching lie in the same shift. \sa{suggestion: interval $I$, containing the points $r_j, ..., r_{j+i}, b_{k}, ..., b_{k+i}$, ... in the same shift $d=k-j$.}
Thus, when querying for the cost of $I$, we seek to calculate the sum of costs of a subset of the pairs lying at some fixed shift $d$. We introduce the following additional notation below to concisely refer to the sum of costs of certain pairs of points lying at the same shift:
\begin{definition}
    \sa{I rather like to directly separate X and Y by color: For any $X \subseteq R, Y \subseteq B$}
    For any $X, Y \subseteq [2n]$ and $d \in \{-n, \ldots, n\}$, let $W(X, Y, d)$ be the sum of costs of all pairs of a red point in $X$ and a blue point in $Y$ lying at shift $d$. That is, $W(X, Y, d) = \sum_{j | r_j \in X, b_{j + d} \in Y} |b_{j + d} - r_j|^p$. When $X = Y$, we may also simply say $W(X, d)$.
\end{definition}
Our range tree will then store the following quantities for each nonleaf range $S \in \mathcal S$:
\begin{itemize}
    \item For all $d$, we will store $W(\mathrm{left}_S, \mathrm{right}_S, d)$. That is, we store the total cost of pairs of a red point in $S$ to the left of the midline and a blue point in $S$ to the right of the midline lying at each shift.
    \item For all $d$, we will store $W(\mathrm{right}_S, \mathrm{left}_S, d)$. That is, we store the total cost of pairs of a blue point in $S$ to the left of the midline and a red point in $S$to the right of the midline lying at each shift.
    \item For all $d$, we will store $W(S, d)$. That is, we store the total cost of all pairs of a red and blue point in $S$ lying at each shift.
\end{itemize}
Note that some of these quantities are stored implicitly, in the sense that some shifts $d$ are irrelevant to a range $S$ as there is no pair of points in $S$ lying at shift $d$, in which case the relevant quantities are $0$. It is easy to detect when this is the case: let $r_j, r_{j'}$ be the leftmost and rightmost red points in $S$, and let $b_k, b_{k'}$ be the leftmost and rightmost blue points in $S$. Then only shifts $d$ in $[k - j', k' - j]$ are relevant, \sa{shorter: relevant, and have a stored value, else we return 0. }so for all other shifts we simply use $0$ whenever one of the relevant quantities is needed. An alternative statement is that for each $S \in \mathcal S$, our range tree is able to compute each of the above three quantities in constant time, by first checking whether $d$ lies in $[k - j', k' - j]$, then either referring to a stored value or returning $0$.

We thus have two components of the data structure and its analysis yet to be explained: first, we must show how the above three quantities can be precomputed in total time $O(n\log^2 n)$; second, we must show how given access to the above quantities, the $p$-Wasserstein distance of the points in a balanced interval $I$ can be computed in time $O(\log n)$. We proceed to show both of these in order.

We begin by showing how the quantities $W(\mathrm{left}_S, \mathrm{right}_S, d)$ and $W(\mathrm{right}_S, \mathrm{left}_S, d)$ can be computed in total time $O(n\log^2 n)$ using Fast Fourier Transform (FFT).
\begin{lemma}
    There exists an algorithm to compute $W(\mathrm{left}_S, \mathrm{right}_S, d)$ and $W(\mathrm{right}_S, \mathrm{left}_S, d)$ for all nonleaves $S \in \mathcal S$ and for all $d$ in time $O(n\log^2 n)$.
\end{lemma}
\begin{proof}
    Note that the proof for each of $W(\mathrm{left}_S, \mathrm{right}_S, d)$ and $W(\mathrm{right}_S, \mathrm{left}_S, d)$ is symmetric, so we will only discuss $W(\mathrm{left}_S, \mathrm{right}_S, d)$. We will specifically first show that for every range $S = [a, b]$, we can compute $W(\mathrm{left}_S, \mathrm{right}_S, d)$ for all relevant $d$ in time $O((b - a)\log(b - a))$. Suppose that the red points lying in the $\mathrm{left}_S$ portion of the sequence $p$ are $r_u, r_{u + 1}, \ldots, r_v$, and the blue points lying in the $\mathrm{right}_S$ portion of the sequence $p$ are $b_s, b_{s + 1}, \ldots, b_t$. Note that the quantity $W(\mathrm{left}_S, \mathrm{right}_S, d)$ for any $d$ is the sum of $|b_{j + d} - r_j|^p$ over all $j$ such that $j \in [u, v]$ and $j + d \in [s, t]$. Additionally note that as all red points lie to the left of the midline $\midline_S$, and all blue points lie to the right of $\midline_S$, we have $|b_{j + d} - r_j| = b_{j + d} - r_j$.

    We thus want to compute the sum of such $(b_{j + d} - r_j)^p$. Consider the binomial expansion $(b_{j + d} - r_j)^p = \sum_{c = 0}^p \binom p c (-1)^c r_j^c b_{j + d}^{p - c}$: \sa{I would like to be a bit more specific. Sum over the terms $\sum_{j \in ...}...$, multiplying each by ...} we will in fact compute the sum of each of these $p + 1$ terms over all $j$ satisfying the constraints, then sum them to get our desired result. 
    To do this, we define polynomials $P_c(x) = r_u^c x^{v - u} + r_{u + 1}^c x^{v - u - 1} + \dotsb + r_{v - 1}^c x + r_v^c$ and $Q_c(x) = b_s^{p - c} + b_{s + 1}^{p - c}x + \dotsb + b_{t - 1}^{p - c}x^{t - s - 1} + b_t^{p - c}x^{t - s}$. That is, $P_c(x)$ is a polynomial whose coefficients starting from the \emph{highest} degree one are the red points' positions raised to the $c$-th power, while $Q_c(x)$ is a polynomial whose coefficients starting from the \emph{lowest} degree one are the blue points' positions raised to the $(p - c)$-th power \sa{I find this starting from formulation a bit odd to think about. I would try to state it more simple: For P, the $i$-th leftmost red point is the coefficient to the degree $c-i$ term, while for $i$-th rightmost ...}. 
    We then apply FFT to compute the product of $P_c$ and $Q_c$ in time $O((b - a)\log (b - a))$. The resulting product $P_c(x)Q_c(x)$ will have a coefficient of $x^l$ that is equal to the sum of $r_j^c b_k^{p - c}$ for all $j \in [u, v], k \in [s, t]$ such that $(v - u) - (j - u)$ and $k - s$ sum to $l$. This can be rearranged as $l = (v - u) - (j - u) + (k - s) = v - j + k - s$ which is equivalent to $k - j = l + s - v$, meaning that the coefficient of $x^l$ is in fact equal to the sum of $r_j^c b_k^{p - c}$ over all pairs $(r_j, b_k)$ of a red point from $\mathrm{left}_S$ and a blue point from $\mathrm{right}_S$ lying at shift $l + s - v$. \sa{I find it a bit convoluted to read but do not know how to make it shorter if we want to include the derivation}
    Thus, if we consider the sum $H(x) = \sum_{c = 0}^p \binom p c (-1)^c P_c(x)Q_c(x)$, we get that the coefficient of $x^l$ in $H(x)$ is exactly $W(\mathrm{left}_S, \mathrm{right}_S, l + s - v )$. Thus, we can store the polynomial $H$, so that for any $d$ answering the value of $W(\mathrm{left}_S, \mathrm{right}_S, d)$ means looking up the coefficient of $x^{d - s + v}$.

    To obtain the total runtime of $O(n\log^2 n)$, we first recall that the computation for a single range $S$ is $O((b - a)\log(b - a))$ where $b - a$ is the length of $S$, then note that, as is well-known for range trees, the total length of all ranges is $O(n\log n)$.
\end{proof}
We next show that the final quantity $W(S, d)$ can be computed by aggregating the previously computed quantities over all subranges of $S$. Intuitively speaking, every pair of points in $S$ crosses the midline of exactly one descendant range of $S$.
\begin{lemma}
    \sa{only given the other values}There exists an algorithm to compute $W(S, d)$ for all $S \in \mathcal S$ and for all $d$ in $O(n\log n)$ time.
\end{lemma}
\begin{proof}
    When $S$ is a leaf, we have $W(S, d) = 0$ for all $d$, as $S$ contains only a single point, and so cannot have both a red point and a blue point. When $S$ is a nonleaf, we crucially note that $W(S, d) = W(\mathrm{left}_S, \mathrm{left}_S, d) + W(\mathrm{left}_S, \mathrm{right}_S, d) + W(\mathrm{right}_S, \mathrm{left}_S, d) + W(\mathrm{right}_S, \mathrm{right}_S, d)$, as for each relevant pair $(r_j, b_k)$, each of $r_j, b_k$ lies in one of $\mathrm{left}_S, \mathrm{right}_S$. 
    We can recursively compute $W(S, d)$ for all $S \in \mathcal S$ as the quantities $W(\mathrm{left}_S, \mathrm{right}_S, d), W(\mathrm{right}_S, \mathrm{left}_S, d)$ are already known, while the quantities $W(\mathrm{left}_S, \mathrm{left}_S, d)$, $W(\mathrm{right}_S, \mathrm{right}_S, d)$ are in fact $W(\mathrm{left}_S, d)$, $W(\mathrm{right}_S, d)$, i.e. they are instances of the same problem for smaller ranges. 
    
    The total time taken is equal to the total space used by the stored results, which is proportional to the total length of all ranges. 
    Recall that relevant shifts $d$ for a range $S$ are those lying in $[k - j', k' - j]$ with $j, j', k, k'$ as defined before \sa{should be defined as they were introduced even before Lemma 13: First introduce the indices in (and lenght of) $S$, then give the ranges} meaning that there are $(j' - j) + (k' - k)$ relevant shifts,
    \sa{split the length of a single node and the whole tree for better understandability}
    while the length of $S$ is equal to $(j' - j + 1) + (k' - k + 1)$), which is then $O(n\log n)$.
\end{proof}
We have thus shown how the construction of the range tree can be achieved, i.e. how the desired quantities $W(\mathrm{left}_S, \mathrm{right}_S, d), W(\mathrm{right}_S, \mathrm{left}_S, d), W(S, d)$ can be computed. It now remains to show how a query can be answered in time $O(\log n)$ using these quantities. Recall that a query gives a balanced interval $I$ and asks for the the total cost of the perfect matching of red and blue points in $I$. Interpreting $I$ as a range over $[1, 2n]$, recall that there is some shift $d$ such that all pairs in this perfect matching lie at shift $d$. $d$ can be computed by identifying the leftmost red and blue points in $I$, then calculating the differences in their positions in the sequences $r_1, \ldots, r_n$ and $b_1, \ldots, b_n$ (all of this information can be precomputed and store in linear time and space). The desired query is the quantity $W(I, d)$. The following lemma explains how $W(I, d)$ can be decomposed as a sum of quantities which we have computed:
\begin{lemma}
    Let $I$ be a nonempty balanced interval in $[1, 2n]$ such that the pairs of points in the optimal perfect matching of $I$ lie at shift $d$. Then for any range $S \in \mathcal S$, we have the following:
    \begin{enumerate}
        \item If $S \subseteq I$, then $W(I \cap S, d) = W(S, d)$.
        \item If $\midline_S \in I$, then $W(I \cap S, d) = W(\mathrm{left}_S, \mathrm{right}_S, d) + W(\mathrm{right}_S, \mathrm{left}_S, d) + W(I \cap \mathrm{left}_S, d) + W(I \cap \mathrm{right}_S, d)$. \sa{personally, I would phrase this as $W(I \cap S, d) = W(S, d) - W(S \setminus I, d)= W(S, d) - W((S \setminus I) \cap \mathrm{left}_S , d) - W((S \setminus I) \cap \mathrm{right}_S , d)$ which would make most of the proof obsolete, but I am fine to leave it this way.}
        \item If $I \subseteq \mathrm{left}_S$, then $W(I \cap S, d) = W(I \cap \mathrm{left}_S, d)$.
        \item If $I \subseteq \mathrm{right}_S$, then $W(I \cap S, d) = W(I \cap \mathrm{right}_S, d)$.
    \end{enumerate}
\end{lemma}
\begin{proof}
    First note that all equalities other than the second are trivial. In brief, the first, third, and fourth inequalities follow by simply noting that $I \cap S = S$, $I \cap S = I \cap \mathrm{left}_S$, and $I \cap S = I \cap \mathrm{right}_S$ respectively.

    We thus move to the second equality, which is more nontrivial as it contains the terms $W(\mathrm{left}_S, \mathrm{right}_S, d)$ and $W(\mathrm{right}_S, \mathrm{left}_S, d)$ rather than $W(I \cap \mathrm{left}_S, I \cap \mathrm{right}_S, d)$ and $W(I \cap \mathrm{right}_S, I \cap \mathrm{left}_S, D)$ as one might expect. In fact, these quantities are equal: specifically, $W(\mathrm{left}_S, \mathrm{right}_S, d) = W(I \cap \mathrm{left}_S, I \cap \mathrm{right}_S, d)$, and $W(\mathrm{right}_S, \mathrm{left}_S, d) = W(I \cap \mathrm{right}_S, I \cap \mathrm{left}_S, D)$. By symmetry, it suffices to show only that $W(\mathrm{left}_S, \mathrm{right}_S, d) = W(I \cap \mathrm{left}_S, I \cap \mathrm{right}_S, d)$, which we will now do.

    Recall that $W(X, Y, d)$ is a sum over pairs $(r_j, b_k)$ such that we have $r_j \in X$, $b_k \in Y$ (note that we slightly abuse the notation here, as $X, Y$ in fact contain the positions of $r_j, b_k$ in the sequence $p_1, \ldots, p_n$, rather than $r_j, b_k$ themselves), and $k - j = d$. Therefore, in order to show that $W(\mathrm{left}_S, \mathrm{right}_S, d) = W(I \cap \mathrm{left}_S, I \cap \mathrm{right}_S, d)$, it suffices to show that for any pair $(r_j, b_k)$ such that $k - j = d$, we have that $r_j \in \mathrm{left}_S, b_k \in \mathrm{right}_S$ holds if and only if $r_j \in I \cap \mathrm{left}_S, b_k \in I \cap \mathrm{right}_S$ holds. We can immediately reduce this showing that for such a pair, $r_j \in \mathrm{left}_S, b_k \in \mathrm{right}_S$ implies that $r_j \in I, b_k \in I$.
    
    To now see that this is true, first note that as $I$ is nonempty and perfectly balanced, its perfect matching must contain some pair $(r_{j_0}, b_{k_0})$; further recall that this pair must lie at shift $d$. Now, consider any pair $(r_j, b_k)$ lying at shift $d$ such that $r_j \in \mathrm{left}_S$ and $b_k \in \mathrm{right}_S$; we would like to show that $r_j, b_k \in I$. Observe that as both pairs lie at shift $d$, if $r_j$ is to the left of $r_{j_0}$, then $b_k$ must also lie to the left of $b_{k_0}$. Similarly, if $r_j$ is to the right of $r_{j_0}$, then $b_k$ must also lie to the right of $b_{k_0}$. Finally, if $r_j = r_{j_0}$, then $b_k = b_{k_0}$. We thus consider these three cases. The last case is automatic as $r_{j_0}, b_{k_0} \in I$ by definition; meanwhile, the first and second cases are symmetric, so we consider only the first case.

    We thus have that $r_j < r_{j_0}$ and $b_k < b_{k_0}$. Note that we still have $b_k \in \mathrm{right}_S$, and therefore $b_k > \midline_S$. As $I$ contains the midline, we have that $b_k$ lies between $\midline_S$ and $b_{k_0}$ which are both contained in $I$, meaning that $b_k$ itself must also be contained in $I$. It then follows that $b_k$ is matched in the perfect matching of points in $I$ to a red point $r_{j'}$ such that $(r_{j'}, b_k)$ lies at shift $d$. This simply means that $k - j' = d$; however, as $(r_j, b_k)$ lies at shift $d$, we have that $k - j = d$, meaning that $j' = j$, and so $(r_j, b_k)$ is present in the perfect matching of $I$, implying that $r_j$ also lies in $I$ as desired.
\end{proof}
This lemma then implies a natural range-tree algorithm for computing $W(S, d)$, which we present in the following final lemma:
\begin{lemma}
     Let $I$ be a nonempty balanced interval in $[1, 2n]$ such that the pairs of points in the optimal perfect matching of $I$ lie at shift $d$. Consider the following algorithm for computing $W(I \cap S, d)$ for any range $S \in \mathcal S$:
     \begin{enumerate}
         \item If $S \subseteq I$, return the stored values for $W(S, d)$.
         \item If $I$ contains $[\midline_S, \midline_S + 1]$, recursively compute $W(I \cap \mathrm{left}_S, d)$ and $W(I \cap \mathrm{right}_S, d)$, and return the sum of those two values as well as the stored values for $W(\mathrm{left}_S, \mathrm{right}_S, d)$ and $W(\mathrm{right}_S, \mathrm{left}_S, d)$.
         \item If $I$ lies fully to the left of the midline (possibly including $\midline_S$ itself), recursively compute and return $W(I \cap \mathrm{left}_S, d)$.
         \item If $I$ lies fully to the right of the midline (excluding $\midline_S$ itself), recursively compute and return $W(I \cap \mathrm{right}_S, d)$.
     \end{enumerate}
     The above algorithm is correct, and runs in time $O(\log n)$ when querying for the root range $S_0$.
\end{lemma}
\begin{proof}
    First, to see that it is correct, we can simply apply Lemma 14. \todo{Replace with proper lemma reference} Specifically, each step's correctness follows from the corresponding equality in Lemma 14. Additionally, note that a value is always returned, as exactly one of steps 2, 3, and 4 is always true (thus, step 1 serves only as an optimization).

    Second, to see that the time taken is logarithmic, first note that the time taken at each step, excluding any recursive calls, is constant. The algorithm then has the same structure as usual algorithms for range tree queries, implying the result by a well-known analysis. \sa{I would explicitly state that there are only $\log n$ many nodes in which step 1 is not taken, and leave out the rest of the analysis}
    In brief, $I$ can be partitioned into ranges in $\mathcal S$, and it can be seen that the ranges recursively queried are exactly the ranges in that partition, as well as their ancestors. The size of the partition can be shown to be logarithmic in $n$ (in fact, logarithmic in the length of $I$), while the total set of ranges queried can be seen to form a rooted binary tree whose leaves are the ranges in the partition and such that each internal node that is not the only node at its depth has exactly two children, implying that the number of internal nodes, possibly excluding a single chain at the root of the tree, is equal to the number of leaves minus one, implying (because the chain's length is also logarithmic) that the total number of nodes in the binary tree, and thus the total number of ranges queried, is logarithmic as desired.
\end{proof}

We have thus shown how the range tree can be queried to produce the desired quantity in $O(\log n)$; as we have already shown how the range tree can be constructed in time $O(n \log^2 n)$, our presentation of the data structure is complete.
}
\vspace{-1mm}
\section{A Near-Linear Time Algorithm for $k$-Partial $p$-Wasserstein Distance}
\vspace{-1mm}
\label{sec:algorithm}
Using the data structure introduced in \Cref{sec:data-structure}, we now present an $O(pn \log^2 n)$-time algorithm to compute minimum-cost $k$-matchings for all $k=1,\ldots,n$ on the line. We note that this general algorithmic framework follows the classical Hungarian algorithm applied to points on a line. A similar approach has previously been used by, e.g., \cite{DBLP:conf/iclr/ChapelT25}. %For the sake of completeness however, we restate and reprove its necessary properties.
%In this section, we describe an efficient algorithm for computing minimum-cost $k$-matchings for all $k=1,\ldots,n$, assuming access to a data structure from Theorem~\ref{thm:main-balanced-interval-ds}. \newstuff{The overall approach described in this section follows the classical Hungarian algorithm applied to points on a line. However, for the sake of readability, we repeat and reprove its necessary properties.}
We begin by reviewing some of the basic properties of matchings needed to describe our algorithm.

A vertex $v \in R \cup B$ is said to be \emph{free} with respect to a matching $M$ if no edge of $M$ is incident to $v$. Let $R_F$ and $B_F$ denote the sets of free red and blue points, respectively. A path $P$ in the complete bipartite graph on $R \cup B$ is called \emph{alternating} (with respect to $M$) if its edges alternate between those in $M$ and those not in $M$. An alternating path is \emph{augmenting} if its two endpoints are free vertices, one in $R_F$ and the other in $B_F$. Flipping a matching along such a path removes the matching edges in $P$ and adds the non-matching edges of $P$; when $P$ is augmenting, this operation is called an \emph{augmentation}, and it increases the size of the matching by one.

For any alternating path or cycle $P$ with respect to $M$, we define its \emph{net cost} to be
\[\Phi(P)\;=\;\sum_{(a,b) \in P \setminus M} |a-b|^p\;-\;\sum_{(a,b) \in P \cap M} |a-b|^p .\]
The net cost $\Phi(P)$ is exactly the change in the total cost of the matching when $M$ is replaced by the symmetric difference $M \oplus P$. The next lemma shows that augmenting a minimum-cost $k$-matching along the minimum net-cost augmenting path gives a minimum-cost $(k+1)$-matching.

\begin{lemma}
\label{lem:hungarianprop}
Let $M_k$ be a minimum-cost $k$-matching. Among all minimum-cost $(k+1)$-matchings, let $M_{k+1}$ be one that maximizes the number of edges it shares with $M_k$. Then the symmetric difference $M_k \oplus M_{k+1}$ consists of a single augmenting path with respect to $M_k$, and this path has minimum net cost among all augmenting paths for $M_k$.
\end{lemma}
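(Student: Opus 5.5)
The plan is the standard symmetric-difference argument behind the Hungarian method. Let $M_k$ be a minimum-cost $k$-matching and $M_{k+1}$ a minimum-cost $(k+1)$-matching sharing the most edges with $M_k$. In $M_k \oplus M_{k+1}$ every vertex has degree at most two, so the symmetric difference splits into vertex-disjoint alternating components. Each component is one of three things: an alternating cycle, an even-length alternating path with as many $M_k$-edges as $M_{k+1}$-edges, or an odd-length path whose end edges both come from the same matching. Since $|M_{k+1}|-|M_k|=1$, the number of odd paths with more $M_{k+1}$-edges exceeds the number with more $M_k$-edges by exactly one.

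\textbf{Step 1: remove every component that is not an augmenting path.} Take any component $Q$ that is a cycle or an even path. Then $M_k\oplus Q$ is again a $k$-matching and $M_{k+1}\oplus Q$ is again a $(k+1)$-matching, with costs $c(M_k)+\Phi_{M_k}(Q)$ and $c(M_{k+1})-\Phi_{M_k}(Q)$ respectively. Optimality of $M_k$ gives $\Phi_{M_k}(Q)\ge 0$, and optimality of $M_{k+1}$ gives $\Phi_{M_k}(Q)\le 0$. Hence $\Phi_{M_k}(Q)=0$, so $M_{k+1}\oplus Q$ is also a minimum-cost $(k+1)$-matching. It shares strictly more edges with $M_k$, because $Q\neq\emptyset$. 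This contradicts the choice of $M_{k+1}$.

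The odd paths are handled by pairing. Pair one odd path $P_1$ with an extra $M_{k+1}$-edge with one odd path $P_2$ with an extra $M_k$-edge. Then $P_1\cup P_2$ preserves both cardinalities, and the same exchange argument applied to $Q=P_1\cup P_2$ again gives a contradiction. Here $M_k\oplus Q$ is still a valid matching because the components are vertex-disjoint.

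After this step exactly one component is left: a single odd path $P$ with one more $M_{k+1}$-edge than $M_k$-edges. Its endpoints are not covered by $M_k$. They cannot be covered by $M_k$-edges outside $P$, since those edges would also lie in $M_k\setminus M_{k+1}$ and hence in the symmetric difference. So the endpoints are free, and by bipartiteness one is red and one is blue. Thus $P$ is augmenting with respect to $M_k$.

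\textbf{Step 2: show minimality.} Let $P'$ be any augmenting path for $M_k$. Then $M_k\oplus P'$ is a $(k+1)$-matching, so
\[c(M_k)+\Phi(P') \;\ge\; c(M_{k+1}) \;=\; c(M_k)+\Phi(P).\]
Hence $\Phi(P)\le\Phi(P')$.

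\textbf{Main obstacle.} The delicate part is Step 1. It requires carefully checking that each exchanged structure still yields valid matchings of the correct cardinalities, and that the swap strictly increases the overlap with $M_k$. Pairing the two kinds of odd paths is the step most easily overlooked.
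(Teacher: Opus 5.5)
Your proof is correct and follows the paper's argument: you decompose $M_k \oplus M_{k+1}$, rule out cycles and even paths using the two-sided exchange together with the tie-breaking choice of $M_{k+1}$, and pair an $M_k$-augmenting path with an $M_{k+1}$-augmenting path to reach the same contradiction. You are more explicit than the paper, which leaves the odd-path counting and the final minimality step implicit; one small precision is that the overlap with $M_k$ strictly increases because $Q\cap M_k\neq\emptyset$, not merely because $Q\neq\emptyset$, though this holds in every case you consider.
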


We defer the proof to \Cref{sec:proof:hungarianprop}. The classical Hungarian algorithm begins with the empty matching and, using a primal-dual framework, repeatedly identifies the minimum net-cost augmenting path and augments the matching along it. In this way, it successively constructs minimum-cost $k$-matchings for all $k = 1,\ldots,n$.

Next, we introduce a compact representation of minimum-cost partial matchings on the line. This reduces computing the net-cost of an augmenting path to two queries to our data structure.

{\bf Interval Representation.} %\label{sec:interval-representation}
Consider a minimum-cost $k$-matching $M$ between the blue points $B$ and the red points $R$. Let $B_F \subseteq B$ and $R_F \subseteq R$ denote the free (unmatched) points with respect to $M$. Each matching edge $(b,r)\in M$ naturally defines the interval $I_{b,r} := [\min\{b,r\},\,\max\{b,r\}]$. Edges in the minimum-cost matching for points on the line satisfy the following property.

% \medskip
% \noindent\textbf{Observation 1.}
\begin{observation}\label{obs:1}
No free point lies in the interior of any interval $I_{b,r}$ 
%defined by an edge $(b,r)\in M$.
of an edge $(b,r)\in M$.
\end{observation}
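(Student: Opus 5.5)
We argue by contradiction, using an exchange argument and the optimality of $M$. Suppose some free point $x$ lies strictly inside $I_{b,r}$ for an edge $(b,r)\in M$. Because $x$ is in the interior, $b\neq r$ and $|x-b|<|b-r|$ and $|x-r|<|b-r|$. The plan is to build another $k$-matching $M'$ by replacing the edge $(b,r)$ with a single edge from $x$ to one endpoint of the interval. Then $M'$ has the same cardinality, and we will show its cost is strictly smaller. This contradicts $M$ being a minimum-cost $k$-matching.

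There are two cases, according to the colour of $x$.
\begin{itemize}
\item If $x\in R_F$, set $M' = (M\setminus\{(b,r)\})\cup\{(b,x)\}$. This is a valid matching: $x$ was free, $b$ remains matched exactly once, and $r$ becomes free.
\item If $x\in B_F$, set $M' = (M\setminus\{(b,r)\})\cup\{(x,r)\}$ symmetrically.
\end{itemize}
In both cases $|M'|=k$, and the cost changes by $|x-y|^p-|b-r|^p$, where $y$ is the retained endpoint. This change is strictly negative, since $t\mapsto t^p$ is strictly increasing on $[0,\infty)$ for $p\ge 1$ and $|x-y|<|b-r|$.

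The one step that needs care is the word ``interior'' when points may coincide in value. If $x$ shared a coordinate with $b$ or $r$, it would lie on the boundary, and the strict inequality would still hold only for the opposite endpoint. So we must replace the endpoint that is not at $x$'s position. Strict interiority gives $\min\{b,r\}<x<\max\{b,r\}$, so both distances are strictly smaller than $|b-r|$, and either choice of retained endpoint works.

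No earlier result beyond the definition of a minimum-cost $k$-matching is needed. I expect no real obstacle; the argument is a one-edge exchange.
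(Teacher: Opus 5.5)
Your proposal is correct and is the same one-edge exchange argument the paper gives: a free point strictly inside $I_{b,r}$ replaces the endpoint of its own colour, which strictly shortens that edge and contradicts the minimality of $M$. Your explicit case split on the colour of $x$ spells out the paper's ``$(p,r)$ or $(b,p)$''; the phrase ``either choice of retained endpoint works'' is loose, since the colour of $x$ forces which endpoint is kept, but your construction already respects this.
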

% This follows from the fact that if 
If a free point $p \in (b,r)$ existed, replacing $(b,r)$ by $(p,r)$ or $(b,p)$ would strictly shorten the edge and reduce the total cost of $M$, contradicting its minimality.

Given a set of disjoint intervals $\mathcal{I}$, we say that two intervals $I, J \in \mathcal{I}$ are \emph{adjacent} if there is no other interval of $\mathcal{I}$ lying strictly between them in the ordering of the real line.
Let $U_M := \bigcup_{(b,r)\in M} \{I_{b,r}\}$ be the set of all edge-intervals, see \Cref{fig:matching-intervals} for an example.  %\todo{The previous description did not match with the picture...}
%\todo{I find the definition of $U_M$ together with the figure confusing. So, $U_M$ is just a set of points and $\mathcal{I}'_M$ is a set of intervals, but otherwise they represent the same thing?}
We describe the \emph{interval representation} $\mathcal{I}_M$ of $M$ in two steps. First, let $\mathcal{I}'_M$ be the set of disjoint intervals, one spanning each connected component of $\bigcup U_M$. We then obtain $\mathcal{I}_M$ by repeatedly merging adjacent intervals of $\mathcal{I}'_M$ as follows: if $I,J \in \mathcal{I}'_M$ are adjacent, with $I=[x_1,x_2]$ and $J=[x_3,x_4]$ and $x_2 \le x_3$, and the interval $[x_1,x_4]$ contains no free points, we replace $I$ and $J$ by the single interval $[x_1,x_4]$.  We continue this process until no further merge is possible. We refer to any pair of free blue and red points $(b,r)\in B_F\times R_F$ as an \emph{adjacent free pair} if no other free point is contained in $I_{b,r}$.

\begin{figure}
\centering
    \includegraphics[width=0.55\textwidth]{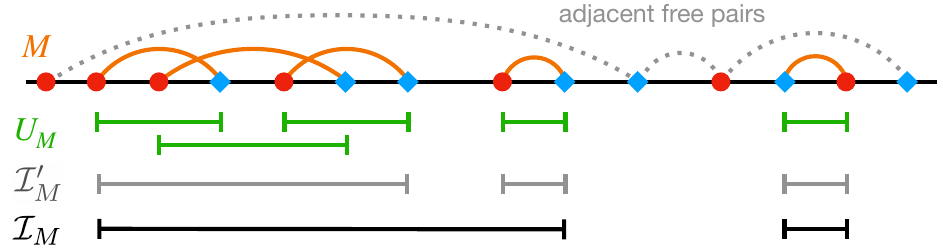}
    \caption{Construction of $\mathcal I_M$ and adjacent free pairs.}\label{fig:matching-intervals}
\end{figure}

\noindent
For any minimum-cost matching $M$, the set of intervals $\mathcal{I}_M$ satisfies the following properties:

\begin{compactenum}[I]
    \item\label{property:balance} \emph{Balancedness:}
    Every interval in $\mathcal{I}_M$ contains the same number of blue and red points. Indeed, the intervals in $\mathcal{I}'_M$, the connected components of $U_M$, are balanced by construction, and merging adjacent disjoint balanced intervals with no free point between them preserves balancedness.

    \item\label{property:char} \emph{Characterization of matched points:} A point $p$ is matched by $M$ if and only if $p$ lies in some interval of $\mathcal{I}_M$. All matched points lie in $U_M$, and by \Cref{obs:1} no free point lies in the interior of any edge-interval; the merging step also does not introduce free points.

    \item\label{property:monotone} \emph{Monotone structure within intervals:} Inside each interval $I \in \mathcal{I}_M$, the matching $M$ pairs the $i$-th smallest blue point with the $i$-th smallest red point. All matching edges lie within intervals of $\mathcal{I}_M$. If this monotone pairing failed inside some $I$, replacing the edges of $M$ in $I$ by the optimal matching on the points of $I$ would yield a cheaper matching, contradicting optimality of $M$.

    \item\label{property:free-pts} \emph{At most one interval between adjacent free points:} For every adjacent free pair $(b,r)$, the interval $I_{b,r}$ contains at most one interval of $\mathcal{I}_M$. Otherwise, since $I_{b,r}$ contains no free points, the contained intervals would have been merged.
\end{compactenum}

These properties make $\mathcal{I}_M$ a compact and convenient representation of the matching $M$. For any balanced interval $I$, let $M_I$ denote the minimum-cost matching on the points in $I$, and define $w^p(I) := \sum_{(b',r') \in M_I} |b' - r'|^p$ to be the matching cost. Using this notation, the net-cost of an augmentation step can be expressed entirely in terms of interval costs.

\begin{lemma} \label{lem:net_cost}
Let $M$ be a minimum-cost $k$-matching, and let $P$ be a minimum net-cost augmenting path with respect to $M$. Then $P$ has endpoints at an adjacent free pair $(b,r)$, and its net-cost is
\[\Phi(P) = \Phi(b,r) \coloneqq \begin{cases}
|b - r|^p, & \text{if } I_{b,r} \text{ contains no interval of } \mathcal{I}_M,\\[6pt]
w^p(I_{b,r}) - w^p(I), & \text{if } I\in\mathcal{I}_M \text{ lies in } I_{b,r}.\end{cases}\]
\end{lemma}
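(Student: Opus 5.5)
We will prove the lemma by an exchange argument based on \Cref{lem:hungarianprop}. Let $M'=M\oplus P$. We may choose $M'$ among the minimum-cost $(k+1)$-matchings so that it shares the maximum number of edges with $M$; then $P=M\oplus M'$ is a minimum net-cost augmenting path. Let $b\in B_F$ and $r\in R_F$ be the endpoints of $P$; these are exactly the points matched in $M'$ but not in $M$.

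The proof proceeds in three steps.

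\textbf{Step 1: the cost of the best augmentation between a fixed free pair.} Fix free points $b,r$ with $b<r$, say, and assume $I_{b,r}$ contains no other free point. By Property~\ref{property:char}, every point strictly between $b$ and $r$ is matched by $M$. By Observation~\ref{obs:1} and the construction of $\mathcal I_M$, every interval of $\mathcal I_M$ is either contained in $I_{b,r}$ or disjoint from it. By Property~\ref{property:free-pts}, at most one interval $I\in\mathcal I_M$ lies inside $I_{b,r}$.

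\emph{No interval inside.} The only augmenting path from $b$ to $r$ that changes no existing edge is the single edge $(b,r)$, with net cost $|b-r|^p$.

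\emph{One interval $I$ inside.} The points of $I_{b,r}$ are exactly $b$, $r$, and the points of $I$, so $I_{b,r}$ is balanced. Consider the matching $M\setminus M_I\cup M_{I_{b,r}}$, where $M_{I_{b,r}}$ is the optimal in-order matching of $I_{b,r}$. Its symmetric difference with $M$ is a single alternating path from $b$ to $r$: the shift increases by one, and the in-order edges chain through the points of $I$. Its net cost is $w^p(I_{b,r})-w^p(I)$. Any augmenting path $Q$ ending at $b$ and $r$ that stays within $I_{b,r}$ yields a perfect matching of $I_{b,r}$ (together with the untouched remainder of $M$), so its net cost is at least $w^p(I_{b,r})-w^p(I)$. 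Hence $\Phi(b,r)$ is achievable and is optimal among augmenting paths with these endpoints that stay local.

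\textbf{Step 2: endpoints of an optimal path form an adjacent free pair.} This is the main obstacle. We argue as follows. Suppose $P$ has endpoints $b,r$ and some free point $q$ lies in $I_{b,r}$.

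Restrict $M'$ to the edges touching points of $I_{b,r}$. Because cost-optimal matchings on the line are non-crossing and in-order within components (the analogue of Properties~\ref{property:char}--\ref{property:monotone} for $M'$), no edge of $M'$ has a free point of $M'$ in its interior. Since $q$ is unmatched in $M'$ as well (only $b$ and $r$ become newly matched), $q$ separates the components of $\mathcal I_{M'}$.

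The component of $\mathcal I_{M'}$ containing $b$ is balanced in $M'$ but unbalanced by exactly one point in $M$, and likewise for the component containing $r$. If $b$ and $r$ lie in different components of $\mathcal I_{M'}$, then the symmetric difference $M\oplus M'$ restricted to each component cannot be a single path connecting $b$ to $r$, because edges never jump over $q$. This contradicts the fact that $P$ is a path from $b$ to $r$.

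So every free point in $I_{b,r}$ would have to be covered by an edge of $M'$, which contradicts Observation~\ref{obs:1} applied to $M'$. We conclude that no free point lies in $I_{b,r}$, i.e., $(b,r)$ is an adjacent free pair.

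\textbf{Step 3: the path stays inside $I_{b,r}$.} Since the edges of both $M$ and $M'$ lie inside balanced intervals that do not straddle the free points $b$ and $r$, the path $P=M\oplus M'$ is confined to $I_{b,r}$. Combining this with Step~1 gives $\Phi(P)\ge\Phi(b,r)$. Minimality of $P$ gives $\Phi(P)\le\Phi(b,r)$, since the path constructed in Step~1 is a valid augmenting path. Therefore $\Phi(P)=\Phi(b,r)$.
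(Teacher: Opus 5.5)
\textbf{There is a genuine gap in Step 3 (confinement of $P$ to $I_{b,r}$).} Your justification is that the edges of both $M$ and $M'=M\oplus P$ lie in balanced intervals that do not straddle the free points $b$ and $r$. That holds only for $M$. In $M'$ the points $b$ and $r$ are matched, so neither Observation~\ref{obs:1} nor Property~\ref{property:char} applied to $M'$ stops an edge of $M'$ (a non-matching edge of $P$) from passing over $b$ or $r$. In fact the interval of $\mathcal{I}_{M'}$ containing $b$ and $r$ can strictly contain $I_{b,r}$. For example, if an interval of $\mathcal{I}_M$ lies immediately to the right of $r$ with no free point in between, the merging step glues it to $I_{b,r}$ once $b$ and $r$ are matched. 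So confinement does not follow from the interval representation. Without it, your Step~1 lower bound, which only covers paths staying inside $I_{b,r}$, does not give $\Phi(P)\ge\Phi(b,r)$.

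\textbf{How the paper fixes this.} It proves confinement by an exchange argument that uses optimality of $M$ and minimality of $P$. An edge of $P$ passing over $r$ cannot lie in $M$, by Observation~\ref{obs:1}. So it is a non-matching edge $(b',r')$ with $r$ strictly between $b'$ and $r'$. The portion of $P$ from $r'$ to $r$ is an $M$-alternating path whose flip yields another $k$-matching, so its net cost is nonnegative. Replacing the portion of $P$ from $b'$ to $r$ by the single edge $(b',r)$ then gives an augmenting path of strictly smaller net cost, since $|b'-r|<|b'-r'|$. This contradicts minimality of $P$; the case of $b$ is symmetric.

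\textbf{An alternative that avoids confinement.} $M'$ is a perfect matching on the points matched by $M$ together with $b$ and $r$. By Step~2, these points split into consecutive balanced blocks: the intervals of $\mathcal{I}_M$ other than $I$, plus $I_{b,r}$. The in-order matching is optimal for balanced point sets on the line and respects such a block decomposition. Hence $\mathrm{cost}(M')\ge\mathrm{cost}(M)-w^p(I)+w^p(I_{b,r})$, reading $w^p(I)$ as $0$ if no interval lies in $I_{b,r}$.

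\textbf{The rest is sound.} Step~2 is the paper's adjacency argument. Step~1 is a cleaner two-sided version of the paper's final cost computation.

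\textbf{One small correction.} $P$ is given, so you cannot choose $M'$ with maximal overlap. You do not need to: for every minimum net-cost $P$, the matching $M\oplus P$ is a minimum-cost $(k+1)$-matching, because \Cref{lem:hungarianprop} shows the minimum net cost equals $C_{k+1}-C_k$.
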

See \Cref{sec:proof_lem_net_cost} for the proof. We now present our algorithm, which uses this characterization to identify minimum net-cost augmenting paths using only two interval-cost queries.

{\bf The Algorithm.} %\label{sec:algo}
%In this section, we assume access to a data structure $\mathcal{D}$ from \cref{thm:main-balanced-interval-ds}.
%Using our data structure from \Cref{thm:main-balanced-interval-ds}, w
%We now describe our algorithm for computing the
%minimum-cost $k$-matchings for all $k$.
Our algorithm incrementally constructs the minimum-cost matchings of sizes $1,2,\ldots,n$. At all times it maintains the current matching $M$ through its interval representation $\mathcal{I}_M$, the free sets $B_F \subseteq B$ and $R_F \subseteq R$, and the set of adjacent free pairs together with their net-costs.

Initially, we set $M=\emptyset$, $\mathcal{I}_M=\emptyset$, $B_F=B$, and $R_F=R$. Each consecutive red-blue pair is adjacent, and since no intervals are present, each adjacent pair $(b,r)$ receives the initial weight $\Phi(b,r)=|b-r|^p$. Each adjacent pair is added to a min-heap $\mathcal{Q}$ with their weight as the key value. We build the data structure $\mathcal{D}$ from \Cref{sec:data-structure} on the point sets $R$ and $B$.

The algorithm now iteratively transforms a minimum-cost $k$-matching into a minimum-cost $(k+1)$-matching via the following steps.

\begin{compactenum}
    \item \textbf{Select the minimum-cost augmentation.} From the min-heap $\mathcal{Q}$, choose the adjacent free pair $(b,r)$ minimizing $\Phi(b,r)$.

    \item \textbf{Update the interval representation and free sets.} By Property~\ref{property:free-pts} of $\mathcal{I}_M$, at most one interval of $\mathcal{I}_M$ lies inside $I_{b,r}$; remove it if present, and insert $I_{b,r}$. Merge adjacent intervals in $\mathcal{I}_M$.
          %Apply the merging rule from Section~\ref{sec:interval-representation}.
          Finally, mark $b$ and $r$ as matched and remove them from $B_F$ and $R_F$.

    \item \textbf{Update adjacency.}
          Remove all adjacent pairs involving $b$ or $r$. At most one new adjacent free pair $(b',r')$ is created.

    \item \textbf{Update net-costs.}
          For any newly created adjacent free pair $(b',r')$, we query $\mathcal{D}$ to compute $\Phi(b',r')= w^p(I_{b',r'}) - w^p(I)$, where $I$ is the unique interval of $\mathcal{I}_M$ that lies inside $I_{b',r'}$.
         \ignore{ \[
          \Phi(b',r') =
          \begin{cases}
          |b'-r'|^p,
            & \text{if } I_{b',r'} \text{ contains no interval of } \mathcal{I}_M,\text{\mcs{Is this possible?}\sa{as a newly created interval, I think no. I still like it for the overview}}\\[6pt]
          w^p(I_{b',r'}) - w^p(I),
            & \text{if } I\in\mathcal{I}_M \text{ lies inside } I_{b',r'}.
          \end{cases}
          \]}
\end{compactenum}

Repeating this process for $n$ iterations yields interval representations and costs of the minimum-cost matchings of all sizes $1,\ldots,n$.  

{\bf Running time.}
Building the data structure $\mathcal{D}$ takes $O(pn\log^2 n)$ time. The remaining initialization, including the construction of the initial priority queue of adjacent free pairs, takes $O(n\log n)$ time. Each iteration performs a constant number of priority-queue operations and interval insertions and deletions, as well as a constant number of interval-cost queries to $\mathcal{D}$, each taking time $O(\log n)$. The $n$ iterations therefore take $O(n\log n)$ time. Thus, the construction of $\mathcal{D}$ dominates the running time. Overall, this implies \Cref{thm:algo-partial-wasserstein}.
\begin{comment}
    We begin by constructing the data structure $\mathcal{D}$ on $R$ and $B$ in $O(n \log^2 n)$ time.  The remaining initialization consists of setting
$M=\emptyset$, $\mathcal{I}_M=\emptyset$, identifying all initially adjacent
free pairs, and inserting them into a priority queue keyed by their net-costs.
All of these operations can be carried out in $O(n \log n)$ time. %Therefore, the total time taken by the initialization is $O(n \log^2 n)$.
In each iteration, the algorithm extracts from the priority queue the adjacent free
pair $(b,r)$ with minimum $\Phi(b,r)$ in $O(\log n)$ time. If an interval of
$\mathcal{I}_M$ lies between $b$ and $r$, it is deleted in $O(\log n)$ time. The
algorithm then forms the interval $[b,r]$, which may merge with at most one
neighboring interval on each side. The resulting
merged interval is inserted into $\mathcal{I}_M$ in $O(\log n)$ time, giving an
$O(\log n)$ update cost per iteration.
A constant number of adjacent free pairs are deleted and created, requiring a constant number of
priority-queue insertions and deletions, each taking $O(\log n)$ time. Computing the
net-cost of the new adjacent pair uses at most two interval-cost queries to
$\mathcal{D}$, each in $O(\log n)$ time.
Thus each iteration costs $O(\log n)$, and all $n$ iterations take $O(n\log n)$ time.
Including the initialization of $\mathcal{D}$, the total running time is
$O(n \log^2 n)$.
\end{comment}

\ignore{
We compute for all $k =1, \dots, n$ the $k$-partial $p$-Wasserstein distance using the output of the $(k-1)$-partial $p$-Wasserstein distance and extending the result appropriately. 
A possible way to do this is to compute all witnessing matchings by applying the classical Hungarian method, which given any bipartite graph $G=(B,R;E)$ and weight function $w:E \to \R$ can compute a minimum cost size $k$ matching for all $k=1, \dots, n$. While its $O(n^2)$ running time is prohibitively slow for our goal, we give a high-level overview of the method to highlight some of the properties of the witnessing matchings it constructs.

The algorithm starts from an empty matching and iteratively increases its size along a carefully chosen augmenting path $P$. An \emph{augmenting path} for a matching $M$ is a path $P_M$ whose extremities are not covered by $M$, but all its internal points are. It follows that the edges of $P_M$ alternate between edges of $M$ and edges outside of $M$ with the first and last edge not being included in $M$. Once the optimal augmenting path is found, a matching of size $|M|+1$ is simply constructed by taking $(V(M) \cup V(P), E(M) \Delta E(P))$, which covers all the points that $M$ did, plus the extremities of $P$. The correctness of this well-known algorithm gives the following corollary.
\begin{corollary}\label{cor:Hungarian-augmenting}
    Let $B,R$ be two sets of size $n$ and $w : B \times R \to \R$ be any cost function. Then there exist a sequence $M_1, M_2, \dots, M_n $ of minimum cost matchings with respect to $w$ such that $|M_i|=i$ and $V(M_1) \subset V(M_2) \subset \dots \subset V(M_n)$.
\end{corollary}

Our algorithm will also construct matchings that iteratively extend the sets of covered points. 
However, instead of using $k$-matchings to represent the solution, we will partition the points matched by the optimal $k$-matching into so-called \emph{balanced intervals}. 
This allows us to have a more compact representation of the partial solutions, and it also allows for a more efficient inspection of possible optimal extensions.

\subsection{Interval Representation of Matchings} 

Given two point sets $B, R$, we call an interval $I\subseteq \R$ \emph{balanced} if $|B \cap I| = |R \cap I|$.
For a point $p \in \R$ and a set of intervals $\mathcal I$ over $\R$, we write $p \in \mathcal I$ if $p \in \bigcup_{I \in \mathcal I} I$. 
The first key observation is that any optimal $k$-partial matching can be decomposed into perfect matchings of balanced intervals.

\begin{lemma}\label{lem:optimal-solution-balanced-intervals}
    Let $k \in [n]$ and $B,R \subset \R$ be two point sets.
    Let $M^\star_k$ be a minimum cost $k$-matching of $B$ to $R$. % with value $w^p_k(B,R)$.
    Then $M^\star_k$ can be represented by a set of disjoint, balanced intervals \(\mathcal{I}\) such that \(\bigcup \mathcal{I}\) contains exactly the matched points of \(M^{\star}_{k}\).
\end{lemma}
\begin{proof}
    Let $\mathcal I$ be the set of inclusion-minimal, balanced, disjoint intervals such that each matched pair of points is contained in one of the intervals.
    Such a set always exists and is unique.

    Assume for the sake of contradiction that there is an interval $I \in \mathcal I$ which contains a point $p$ that is unmatched by $M^\star_k$. Assume that $p \in R$.
    If there is a matching pair $(r, b)$ whose interval spans $p$, we can switch $(r,b)$ for $(p, b)$ and receive a matching with less weight, contradicting the optimality of \(M^{\star}_{k}\).
    If there is no such matching pair, we may split $I$ into two sub-intervals, leaving $p$ out while still containing all matched pairs of points, thus contradicting the inclusion-minimality of $\mathcal I$.
\end{proof}

To decide which balanced interval \(I\) we add to our solution, we use a data structure that computes the $p$-Wasserstein distance of the points inside \(I\). By \Cref{lem:aligned-matching} this is the $w^p$-cost of matching the $i$-th leftmost blue to the $i$-th leftmost red point within the interval. 
\todo{Why did we delete the lemma that is referenced in this paragraph?}

A formal specification of this data structure is provided in \Cref{thm:DS-balanced-interval-matching}.

\begin{theorem}[Balanced Interval Matching]
    \label{thm:DS-balanced-interval-matching}
    Given a parameter \(p \in \N\), and sets $B,R \subset \R$ of $n$ blue  and $n$ red points, one can construct a data structure in time $O(n\log^2 n)$ that for any balanced query interval $I = [x, y]$ computes
    \[
        w^p(B \cap I, R \cap I) = \sum_{(b_i, r_i) \in (B \cap I) \times (R \cap I)} |r_i - b_i|^p
    \] 
    in time $O(\log  n)$, where $b_i, r_i$ denote the $i$-th leftmost blue and red point inside $I$, respectively.
\end{theorem}

In the next section we describe our algorithm assuming the existence of the data structure provided by \Cref{thm:DS-balanced-interval-matching} and then prove \Cref{thm:DS-balanced-interval-matching} in \Cref{sec:data-structure}.

\subsection{Algorithm for Partial $p$-Wasserstein Distance on the Line}
\label{sec:hungarian}

For notational convenience, we write  $w^p(I) := w^p(B \cap I, R \cap I)$ to denote the $p$-Wasserstein distance between points within a balanced interval $I$ and extend the notation for a set of intervals $\mathcal{I}$ as $w^p(\mathcal I) = \sum_{I \in \mathcal I} w^p(I)$.

 Given the point sets $B, R \subset \R$, our algorithm iteratively computes their $k$-partial $p$-Wasserstein distance for all values of $k \in [n]$.
Recall from \Cref{lem:optimal-solution-balanced-intervals}, that any optimal matching of size $k$ solution can be represented as a set of balanced intervals. This together with \Cref{cor:Hungarian-augmenting} imply that there is a sequence $\I_1, \I_2, \dots, \I_n $ of sets of balanced intervals such that for any $k \in [n]$
\begin{itemize}
    % \item the union of intervals in $\I_k$ contains exactly $k$ blue and $k$ red points;
    \item there is an optimal matching of size $k$ that covers exactly the points in $\I_k$;
    \item there are exactly two points that are contained in $\I_{k+1}$ but not by $\I_k$ and these are extremities of an interval in $\I_{k+1}$.
\end{itemize}
% The last property can be deduced from the fact that the matchings in \Cref{cor:Hungarian-augmenting} are updated via augmenting paths.

% After each iteration, we output the current value of the $k$-partial $p$-Wasserstein distance as well as the update the corresponding set of intervals with the (single) interval changed during the current iteration.
% Let $\mathcal I_k$ be the solution set of iteration $k$, which is a set of balanced intervals covering exactly $k$ blue and $k$ red points.

Our algorithm will compute such a sequence of sets of balanced intervals by simply searching for the cheapest extension at each iteration. 
Importantly, we can do this using the data structure of \Cref{thm:DS-balanced-interval-matching} and without maintaining the optimal $k$-matchings and searching for searching for an optimal augmenting path with respect tot them. Note that once the optimal set $\I_k$ of balanced intervals is given, an optimal matching can be simply be given via \Cref{lem:aligned-matching}. 

In addition to $\mathcal I_k$, we also maintain a set $E_k \subseteq R \times B$ of all candidate pairs that could potentially define a new interval in $\mathcal I_{k+1}$. In particular, $E_k$ consists of all pairs $(r,b)$ such that  $r, b \notin \mathcal I_k$ and there is no point $p \in (\min(r,b), \max(r,b))$ with $p \notin \mathcal I_k$.
The weight of each pair $(r,b) \in E_k$ is defined as the weight increment encountered when adding the interval $[\min(r,b), \max(r,b)]$ to the solution, that is \[
    w^p((r,b)) = w^p([\min(r,b), \max(r,b)]) - w^p(\mathcal{I}_k \cap (\min(r,b), \max(r,b))).
\]

We initalise $\mathcal I_0 = \emptyset$ and $E_0$ to be the set of all pairs $(r, b)$ of neighboring points of different colors, whose weight is set to $|b - r|^p$.
We build the data structure $\Gamma$ given by \Cref{thm:DS-balanced-interval-matching} on the point sets $B, R$.
% We now apply the successive shortest path algorithm. 
Given $\mathcal{I}_k$ and the weighted set $E_k$, to obtain $\mathcal I_{k+1}$ we include the pair $(r^\star,b^\star) \in E_k$ with the lowest weight, that is,
we remove all intervals $\mathcal I_k \cap [\min(r^\star,b^\star), \max(r^\star,b^\star)]$ and add the new interval $I^\star = [\min(r^\star,b^\star), \max(r^\star,b^\star)]$. As $(\min(r^\star,b^\star), \max(r^\star,b^\star))$ is a union of intervals from $\I_k$, each of which is balanced, and $r^\star$ is red and $b^\star$ is blue, the interval $I^\star$ is balanced.
By definition, we have $w^p(\mathcal I_{k+1}) = w^p(\mathcal I_k) + w^p((r^\star,b^\star))$. 
We yield the new weight $w^p(\mathcal I_{k+1})$ and the new interval bounds.

Next, we can use $E_k$ and $(r^\star,b^\star)$ to compute $E_{k+1}$: we start from $E_k$, remove all pairs containing $r^\star$ or $b^\star$ and consider the the pair $(p_x, p_y)$ where $p_x \notin \mathcal I_{k+1}$ is the largest unmatched point below $\min (r^\star,b^\star)$  and $p_y \notin \mathcal I_{k+1}$ is the smallest unmatched point above $\max (r^\star,b^\star)$. If $p_x, p_y$ are of different color, then we add the pair $(p_x, p_y)$ to $E_{k+1}$ and set its weight to
\[
    w^p((p_x, p_y)) = w^p([p_x, p_y]) - w^p(\mathcal I_k \cap [p_x, p_y]),
\]
where we look up $w^p(\mathcal I_k \cap [p_x, p_y])$ from previous calculations and query $w^p([p_x, p_y])$ using $\Gamma$.

\begin{claim}\label{lem:all-neccessary-edges}
The above construction of $E_{k+1}$ is correct, that is, it contains all pairs $(r,b)\in R\times B$ such that  $r, b \notin \I_{k+1}$ and there is no $p \in (\min(r,b), \max(r,b))$ with $p \notin \I_{k+1}$.
\end{claim}

\begin{proof}
We prove the claim by induction on $k$. For $k=0$ we have $\mathcal I_0 = \emptyset$ and $E_0$ includes all pairs of neighboring points of different colors.

    % We additionally show that $\mathcal{I}_k$ is a set of disjoint balanced intervals including $k$ blue and red points. 
    
Assume for a $k\in[0,n-1]$ that $E_k$ is the set of all pairs $(r,b)\in R \times B$ such that $r, b \notin \I_{k}$ and there is no $p \in (\min(r,b), \max(r,b))$ with $p \notin \I_{k}$. 
    Let $I^\star$ be the interval added to construct $\mathcal I_{k+1}$ and $\{r^\star,b^\star\}$ be the set of its endpoints. Recall that by construction, $(r^\star,b^\star) \in E_k$.
    % As $I^\star$ the endpoints of $I^\star$ form a pair in $E_k$, also $\mathcal{I}_{k+1}$ is a set of balanced intervals containing $k+1$ points of $R, B$ each. 
    % By our construction, $\mathcal{I}_{k+1}$ is disjoint too.
    
    All pairs in $(R \times B) \setminus E_k$ that do not span an unmatched point in $\mathcal I_{k+1}$, which are those we need to include to construct $E_{k+1}$, must have spanned a point in $I^\star$.
    As $I^\star$ itself does not span an unmatched point by construction, it suffices to add the single pair that spans exactly $I^\star$.
    
    As we delete pairs for which an endpoint is contained in $I^\star$, we delete all pairs which were invalidated in this iteration. 
    As in $E_k$, there are no pairs that intersect $\mathcal I_k$ by assumption, also $E_{k+1}$ only contains pairs in $(R \times B) \setminus \mathcal I_{k+1}$.
\end{proof}

\begin{lemma}\label{lem:runtime}
    Our algorithm runs in time $O(n \log^2 n)$.
\end{lemma}
\begin{proof}
    We first bound the number of intervals and pairs we consider during the course of the algorithm. 
    After that, we analyse the runtime to compute the pair weights.

    In each of the $n$ iterations we only add one interval and each interval can be removed once, we only handle $n$ intervals in total. For each interval, we additionally store its weight.
    At the beginning of the algorithm, we add a pair between neighboring points of different colors which are up to $2n$ many.
    In each iteration, for each interval in the current solution $\I_k$ we have at most one pair in $E_k$, resulting in overall $O(n)$ many pairs we maintain. Each pair is removed at most once.
    Adding and removing intervals as well as pairs can be realized with a segment tree and balanced binary search tree in time $O(n \log n)$.
    %\todo{Imho this paragraph suffers from the fact that intervals and pairs are so similar, but not handled similarly. Also I would not use a segment tree for the ``pairs'', rather I would use a balanced binary search tree sorted by value. For the ``intervals'' you can use a segment tree, but a balanced BST by endpoints would also work just fine. Update: For the intervals I still included the segment tree, not because the BST does not work, but a reviewer might more easily see that this is possible without any further explanation (technically, we have a range deletion and range query operation) }

    For any pair $(r,b) \in E_k$, let $I^\star = [\min(r,b), \max(r,b)]$. 
    To compute the value $w^p((r,b))$, we query the value $ w^p(\mathcal I_k \cap I^\star) = \sum_{I \in \mathcal I_k \cap I^\star} w^p(I)$, which is stored within our segment tree, in $O(\log n)$ time.
    For the value of $w^p(I^\star)$, we query our data structure in time $O(\log n)$\todo{check} by \Cref{thm:DS-balanced-interval-matching}.
    As we only consider $O(n)$ pairs, computing the weight of those is possible in overall $O(n\log n)$ time.
    Thus the runtime of our algorithm is dominated by initially constructing the data structure of \Cref{thm:DS-balanced-interval-matching} which takes $O(n \log^2 n)$ time.
\end{proof}

% This is a restatable now. Theorem statement moved to intro.
\maintheorem*

\begin{proof}
\todo{naming}
    Our algorithm is a variant of the successive shortest path algorithm, running in time $O(n \log^2 n)$ by \Cref{lem:runtime}.
    By \Cref{lem:all-neccessary-edges}, we provide our variant all edges the successive shortest path algorithm would consider, whose weights are consistently set to the difference in $p$-Wasserstein distance.
    As the successive shortest path algorithm incrementally computes the $k$-partial minimum-cost matching, we compute the $k$-partial $p$-Wasserstein distance for all $k$. 
\end{proof}
}

%%% Local Variables:
%%% mode: LaTeX
%%% TeX-master: t
%%% End:

% LocalWords:  Wasserstein

\section{Experimental Evaluation}
\label{sec:experiments}
%\todo{J: discuss hardware? I am doing it currently on my macbook lol}
\begin{figure}
    \centering
    \includegraphics[width=0.915\linewidth]{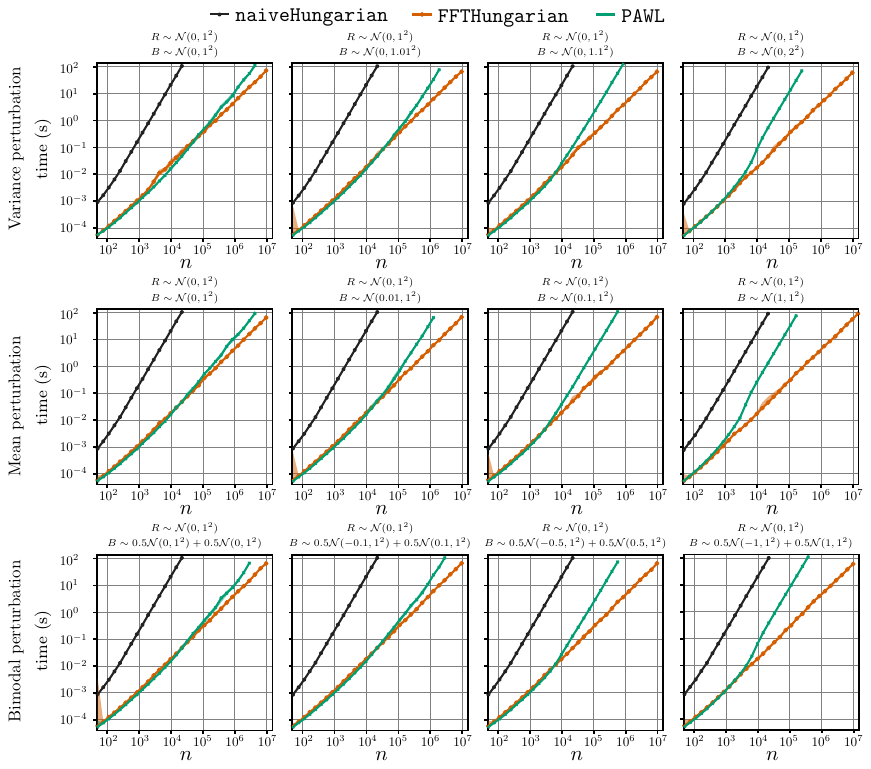}
\caption{Running-time comparison of our naive (black) and FFT-based (orange) implementations with implementation from Chapel and Tavenard~\cite{DBLP:conf/iclr/ChapelT25} (green) on synthetic data sets. Curves show median running times; shaded regions show the $10$th--$90$th percentile range across runs. Rows correspond to distribution families, with increasing distributional difference from left to right.}
    \label{fig:running_time}
\end{figure}
\begin{comment}
\begin{figure}[p]
    \centering
    \includegraphics[width=0.6\linewidth]{pictures/Running_time_same.png}
    \caption{Running time comparison of our implementation with a naive $\tilde{O}(n^2)$ implementation, and the implementation given in \cite{DBLP:conf/iclr/ChapelT25} on synthetic data sampled from normal distributions with the same mean and standard deviation. (This is a png, need to redo this...)}
    \label{fig:runningtime_same}
\end{figure}
\begin{figure}[p]
    \centering
    \includegraphics[width=0.6\linewidth]{pictures/Running_time_different_std.pdf}
    \caption{Running time comparison on synthetic data sampled from normal distributions with the same mean, but different standard deviation ($1$ vs $1.3$).}
    \label{fig:runningtime_std}
\end{figure}
\todo{J: Figure description (of the exact distributions used) is not $100\%$ correct. gonna fix later.}
\begin{figure}
    \centering
    \includegraphics[width=0.6\linewidth]{pictures/Running_time_different_mean.pdf}
    \caption{Running time comparison on synthetic data sampled from beta distributions $B(1,2)$ and $B(2,1)$, that is, same standard deviation, but different mean.}
    \label{fig:runningtime_mean}
\end{figure}
\end{comment}

To validate the efficiency and practicality of our approach, we compare our algorithm against the algorithm given in \cite{DBLP:conf/iclr/ChapelT25}, with $p=2$.

%\vspace{-1mm}
{\bf Implementation Details.} 
We consider three implementations. The first is a naive Python baseline, \texttt{naiveHungarian}, which computes the cost of an interval with $M$ points in $O(M)$ time. The second is the implementation \their of \cite{DBLP:conf/iclr/ChapelT25}, which follows the same principle but is substantially optimized using \texttt{numba}/\texttt{njit}. Our implementation \our\footnote{The code is available at \url{https://github.com/JacobusTheSecond/wasserstein}} is written in \texttt{C++} and follows the description in \Cref{sec:algorithm}, with one practical modification: for intervals containing fewer than $2048$ points, we compute the cost naively; otherwise, we query our data structure in $O(\log n)$ time. The threshold was chosen to approximately balance the two running times.

%We give a naive baseline implementation \texttt{naiveHungarian}, which is written in \texttt{python}, and computes the cost of any interval naively in $O(M)$ time, where $M$ is the number of points in the interval. The algorithm \their from \cite{DBLP:conf/iclr/ChapelT25} functions in a similar way, is however more tuned, and uses \texttt{numba}/\texttt{njit}, which significantly improves the wall-clock time. Our implementation, \our\footnote{The code is available at \url{https://gitlab.com/JacobusTheSecond/wasserstein}}, is written in \texttt{C++}, and follows the description given in \Cref{sec:algo}, except for the following modification: Whenever an interval is queried for its cost, we fall back to the naive $O(M)$ computation, whenever there are less than $1024$ points in the interval. Otherwise, we query the data structure, which takes $O(\log n)$ time. The cutoff of $1024$ was chosen to approximately balance the two running times. %Otherwise, the implementation is faithful to the theoretical description.
%\vspace{-1mm}

{\bf Experimental Setup.} We evaluate the algorithms on synthetic instances designed to cover regimes in which the input distributions range from identical to moderately different. We use samples of varying sizes from: (i) two normal distributions $\mathcal{N}(0,1^2)$ and $\mathcal{N}(0,\sigma^2)$ with different standard deviations $\sigma\in\{1,1.01,1.1,2\}$, (ii) two normal distributions $\mathcal{N}(0,1^2)$ and $\mathcal{N}(\mu,1^2)$ with different means $\mu\in\{0,0.01,0.1,1\}$, and (iii) one unimodal distribution $\mathcal{N}(0,1^2)$ and one bimodal distribution $\frac{1}{2}\mathcal{N}(-\mu,1^2)+\frac{1}{2}\mathcal{N}(\mu,1^2)$ with $\mu\in\{0,0.1,0.5,1\}$.
%(i) two normal distributions with equal mean and standard deviations $\sigma\in\{1,1.01,1.1,2\}$; (ii) two normal distributions with equal standard deviation and means $\mu\in\{0,0.01,0.1,1\}$; and (iii) a unimodal distribution $\mathcal{N}(0,1^2)$ and a bimodal distribution $\frac{1}{2}\mathcal{N}(-\mu,1^2)+\frac{1}{2}\mathcal{N}(\mu,1^2)$ with $\mu\in\{0,0.1,0.5,1\}$. 
In all cases the distributions have substantial overlap, which is favorable to the augmenting-path-based baselines. %Each instance-algorithm pair is run at least three times with a cutoff of $60$ seconds, and we report the $10$th percentile, median, and $90$th percentile running time. 
%Each instance--algorithm pair was run between $3$ and $20$ times, depending on the projected running time, with a $60$-second cutoff; we report the $10$th percentile, median, and $90$th percentile of the running times across runs.
Each instance--algorithm pair was run $3$--$20$ times, depending on its running time, with a $60$-second cutoff; we report the $10$th percentile, median, and $90$th percentile running times.
All experiments were conducted on an Apple MacBook Air with an M4 processor and $24$ GB of RAM. %; memory usage
%was never a bottleneck.
%For the experiments, we use a range of synthetic data that explores in which regime which implementation is best. In particular, we use synthetic data consisting of samples of different sizes of (i) two normal distributions $\mathcal{N}(0,1^2)$ and $\mathcal{N}(\mu,\sigma^2)$ with same mean $\mu=0$ and different standard deviation $\sigma\in\{1,1.01,1.1,2\}$, (ii) two normal distributions $\mathcal{N}(0,1^2)$ and $\mathcal{N}(\mu,\sigma^2)$ with same standard deviation $\sigma=1$ and different mean $\mu\in\{0,0.01,0.1,1\}$, and (iii) one unimodal distribution $\mathcal{N}(0,1^2)$ and one bimodal distribution $\frac{1}{2}\mathcal{N}(-\mu,1^2)+\frac{1}{2}\mathcal{N}(\mu,1^2)$ with $\mu\in\{0,0.1,0.5,1\}$. Note that in all three cases the two distributions still have significant overlap, so often times, augmenting paths are not too long. Every algorithm is ran on the same data, with a cutoff time of $60$ seconds. Each instance-algorithm combination is ran at least three times, with only the median time being reported. %\todo{J: currently waiting for this experiment to be done [ETA $6$h]... The current figure has cutoff time $1$ second} 
%All experiments were conducted on an Apple MacBook Air equipped with an M4 processor and $24$ GB of RAM. Every algorithm uses near-linear space, so the amount of RAM was at no point a bottleneck.

{\bf Evaluation.} 
The running times are shown in \Cref{fig:running_time}. We observe that \our is significantly faster than \their whenever $n\geq 10\,000$ and the two distributions are not identical. Even when the standard deviation or mean differs by only $10\%$, the running time of \their approaches its quadratic worst-case behavior, while \our continues to scale near-linearly. Across all tested regimes, \our is never substantially slower than \their and becomes increasingly faster as the input size grows. 
Overall, for large samples from even slightly different distributions, \our is the only implementation that remains practically viable. This regime is central in applications, where the goal is to compare samples from unknown distributions rather than from exactly identical ones.

\section{Lower Bounds}
\label{sec:lower-bounds}
In this section we show that the dependence on $p$ of our running time is unavoidable, by proving a quadratic lower bound for $p = \infty$. Details of our second set of lower bounds are given in \Cref{sec:omitted_lbs}.%
%%%%%%%%%%%%%%%%%%%%%%%%%%%%%%%%%%%%%%
% Removed the whole paragraph because
% it was repetitive.
%%%%%%%%%%%%%%%%%%%%%%%%%%%%%%%%%%%%%%
% In this section we present conditional lower bounds for two problems.
% First, we show quadratic hardness of computing the matching profile of the $\infty$-Wasserstein distance for points on the line under the \((\min, +)\)-Convolution Conjecture.
% Second, consider a data structures \(\mathcal D\) for the Two Intervals Matching Problem.
% We show that if \(\mathcal D\) is \enquote{combinatorial}, the product of preprocessing and query time can not be sub-quadratic, unless truly-subcubic \enquote{combinatorial} matrix multiplication is possible.
% On the other hand, even if \(\mathcal D\) uses non-combinatorial methods, it can not achieve the same bounds we achieve for Balanced Interval Matching, unless matrix multiplication is possible in near-quadratic time.
% Namely, achieving near-linear preprocessing and polylogarithmic query time is not possible under this assumption.
%\subsection{\(k\)-Partial \(\infty\)-Wasserstein Distance on the Line}
%Our lower bound is based on the conjectured hardness of \((\min,+)\)-convolution. 
\begin{definition}[$(\min,+)$-convolution]
Given two sequences $(A[i])_{i=0}^{n-1}$ and $(B[j])_{j=0}^{n-1}$ of length $n$, compute the sequence $(C[k])_{k=0}^{n-1}$ with $C[k] = \min_{i+j = k} A[i] + B[j]$.
\end{definition}%
% The \emph{\((\min,+)\)-Convolution Conjecture} states that there is no truly subquadatric algorithm for computing the \((\min, +)\)-convolution~\cite{DBLP:journals/talg/CyganMWW19}, i.e., given two integer sequences of length \(n\) with entries in \([-W, W]\), no algorithm can achieve running time \(O(n^{2-\varepsilon}\polylog W)\).
The \emph{\((\min,+)\)-Convolution Conjecture} states that, given two integer sequences of length \(n\) with entries in \([-W, W]\), no algorithm can compute their \((\min, +)\)-convolution in \(O(n^{2-\varepsilon}\polylog W)\) time for any \(\eps>0\)~\cite{DBLP:journals/talg/CyganMWW19}.
We also consider the analogously defined \((\max, -)\)-convolution. %, which simply replaces \enquote{$\min$} by \enquote{$\max$} and \enquote{$A[i] + B[j]$} by \enquote{$A[i] - B[j]$} in the definition.

\looseness=-1
In \Cref{sec:algorithm,sec:data-structure}, we show how to compute for a fixed \(p\in\N\) all \(k\)-partial \(p\)-Wasserstein distances in time \(O(n \log^{2} n)\). We now consider the case \(p = \infty\).
%We first note that our algorithm from \Cref{sec:algorithm} directly generalizes to this setting given an adapted data structure for the \(\infty\)-Wasserstein distance on a balanced interval.
%The data structure we presented in \Cref{sec:data-structure} only achieves running time of \(O(n\log^{2} n)\) preprocessing and \(O(\log n)\) query time for fixed \(p\).
%If \(p\) is part of the initial input, both terms depend linearly on~\(p\).
%The bottleneck here is that we need to compute one convolution for all integers \(0 \leq i \leq p\).
Note that our algorithm from \Cref{sec:algorithm} directly generalizes to this setting, by computing the values $W(\cdot,\cdot)$ in the data structure via \((\min, +)\)-convolutions. %, we are able to compute the \(\infty\)-Wasserstein distance of balanced intervals.
The best known \((\min, +)\)-convolution algorithm is due to Williams \cite{williams2014FasterAllpairsShortest, DBLP:journals/algorithmica/BremnerCDEHILPT14} with a running time of \(O(n^2/2^{\sqrt{\Omega(\log n)}})\), which implies a construction time of \(O(n^2/2^{\sqrt{\Omega(\log n)}})\) for our data structure, which is \emph{not} truly subquadratic.
% Note that for all \(\varepsilon > 0\), this running time is in \(\Omega(n^{2-\varepsilon})\).

We now show that this is not an artifact of our approach but rather an inherent barrier of the problem by providing a reduction from \((\min,+)\)-convolution to the problem of computing all \(k\)-partial \hbox{\(\infty\)-Wasserstein} distances.
As an intermediate step, we reduce computing the \((\min, +)\)-convolution on arbitrary instances to computing the \((\max, -)\)-convolution on structured instances.

% \todo{this lemma is given in cygan, 2.2. as a remark, the structured instances also are a trivial extension, pls check. I dont know whether we want to have it as a lemma}
% \todo{This probably depends on how familiar the target audience is with fine-grained reductions.}
\begin{lemma}\label{thm:max-minus-convolution}
  Assume the \((\min, +)\)-convolution conjecture.
  Let \((X)_{i=0}^{n-1}\) an increasing and \((Y)_{j=0}^{n-1}\) be a decreasing integer sequence with entries in \([-W, W]\).
  Then, for no \(\varepsilon>0\) there is an algorithm that takes \(X\) and \(Y\) as input and computes their \((\max, -)\)-convolution in time \(O(n^{2-\varepsilon} \polylog W)\).
\end{lemma}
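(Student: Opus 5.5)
We reduce an arbitrary $(\min,+)$-convolution instance $(A,B)$ of length $n$ with entries in $[-W,W]$ to a single $(\max,-)$-convolution instance with $X$ strictly increasing and $Y$ strictly decreasing. The reduction is linear-time and has only polynomial blow-up in the entry bound, so an $O(n^{2-\varepsilon}\polylog W)$ algorithm for the structured problem would give one for $(\min,+)$-convolution.

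\textbf{Step 1: from $(\min,+)$ to $(\max,-)$.} We use $\min_{i+j=k}(A[i]+B[j]) = -\max_{i+j=k}(-A[i]-B[j])$. Set $X'[i]=-A[i]$ and $Y'[j]=B[j]$. Then $\max_{i+j=k}(X'[i]-Y'[j]) = -C[k]$, so negating the $(\max,-)$-convolution of $X'$ and $Y'$ recovers $C$. The entries stay in $[-W,W]$ and the step costs linear time.

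\textbf{Step 2: imposing monotonicity.} Choose $K = 2W+1$ and define $X[i]=X'[i]+K i$ and $Y[j]=Y'[j]-K j$. Then $X[i+1]-X[i] = X'[i+1]-X'[i]+K \ge -2W+K >0$, so $X$ is strictly increasing. Symmetrically, $Y$ is strictly decreasing. For $i+j=k$ we have
\[
X[i]-Y[j] = X'[i]-Y'[j] + K(i+j) = X'[i]-Y'[j] + Kk .
\]
Hence $\max_{i+j=k}(X[i]-Y[j]) = \max_{i+j=k}(X'[i]-Y'[j]) + Kk$. We recover the original values by subtracting $Kk$ from each output entry. The new entries lie in $[-W', W']$ with $W' = W + Kn = O(nW)$.

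\textbf{Step 3: accounting.} Suppose an algorithm solves the structured problem in $O(n^{2-\varepsilon}\polylog W')$ time. Then $\polylog W' = \polylog(nW) \le \polylog n\cdot\polylog W$. The polylog factor in $n$ is absorbed by slightly shrinking $\varepsilon$, for example to $\varepsilon/2$. This yields an $O(n^{2-\varepsilon/2}\polylog W)$ algorithm for $(\min,+)$-convolution, contradicting the conjecture.

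The only point needing care is this last bookkeeping step. The bound on entries grows from $W$ to $O(nW)$, so we must check that the conjecture's allowance of $\polylog W$ tolerates it. Since $\log(nW) = \log n + \log W$, it does.
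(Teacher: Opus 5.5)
Your proposal is correct and follows the paper's proof: the paper sets $X[i]=3iM-A[i]$ and $Y[j]=-3(j+1)M+B[j]$, which is your construction with step $3M$ in place of your $K=2W+1$, plus a harmless constant offset in $Y$. It then recovers $C[k]$ from $Z[k]=3(k+1)M-C[k]$ and absorbs the $\polylog n$ factor from the enlarged range $[-6Mn,6Mn]$ by passing from $\varepsilon$ to $\varepsilon/2$, exactly as you do.
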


% Reductions similar to \Cref{thm:max-minus-convolution} are considered standard in the literature, see, e.g., \cite{DBLP:journals/talg/CyganMWW19}.
We defer the proof to \Cref{sec:proof:max-minus-convolution}.
Finally, we reduce such structured instances of \((\max, -)\)-convolution to the computation of all \(k\)-partial \(\infty\)-Wasserstein distances, proving \Cref{thm:firstLowerbound}.
% We can now show the main result of this subsection.

%\minpluslowerbound*

\begin{proof}[Proof of \Cref{thm:firstLowerbound}]
Consider an increasing array \(X\) and a decreasing array \(Y\), each with \(n\) elements and entries from the range \([W, 2W]\) and \([0, W]\) respectively.
% This ensures that \(\min X \geq \max Y\).
Let $Z$ be their \((\max, -)\) convolution.
By \Cref{thm:max-minus-convolution}, there is no algorithm computing \(Z\) in time \(O(n^{2-\varepsilon}\polylog W)\) for any \(\varepsilon>0\).
Let \(k \in [n]\) and consider the \(k\)-partial \(\infty\)-Wasserstein distance between \(X\) and \(Y\).
Since \(\max_j Y[j] \leq \min_i X[i]\) and as \(X\) is increasing and \(Y\) is decreasing, the optimal \(k\)-matching matches the \(k\) smallest values of \(X\) to the \(k\) largest values of \(Y\).
Thus, the \(\infty\)-Wasserstein cost of the \(k\)-matching is
\(
  \max_{i+j=k}|X[i] - Y[j]|
\),
which is equal to $Z[k]$. Thus, the the vector of all \(k\)-partial \(\infty\)-Wasserstein cost of the \(k\)-matching cannot be computed in \(O(n^{2-\varepsilon}\polylog W)\) time.
%Hence, the $(\max,-)$-convolution of $X, Y$ and the vector of all \(k\)-partial \(\infty\)-Wasserstein distances of $X,Y$ are equal, 
%concluding the proof.
\end{proof}

\section*{Acknowledgments}

This research was initiated during the Workshop ``Massive Data Models and Computational Geometry~II" held at
the University of Bonn in September 2025.

Funding in direct support of this work:
\begin{compactitem}
    \item Jacobus Conradi is funded by the Carlsberg Foundation, grant CF24-1929.
    \item M\'onika Csik\'os was supported by the program “Investissement d’Avenir” launched by the French Government and implemented by ANR, with the reference ANR-18-IdEx-0001 as part of its program Emergence for the project VC-GRAPHES.
    \item Niko Hastrich received funding from the European Research
Council (ERC) under the European Unions Horizon 2020 research and innovation program (grant agreement No. 850979).
    \item Danny Mittal was supported by a travel grant from NSF 2115677, and is also partially supported by DARPA expMath, ONR MURI 2024 award on Algorithms, Learning, and Game Theory, Army-Research Laboratory (ARL) grant W911NF2410052, NSF AF:Small grants 2218678, 2114269, 2347322.
    \item André Nusser is supported by the France 2030 investment plan managed by the ANR as part of the Initiative of Excellence of Université Côte d'Azur with reference number ANR-15-IDEX-01.
    \item Work by Sharath Raghvendra is supported by an NSF Grant CCF-2514753.
\end{compactitem}

\bibliography{references.bib}

@inproceedings{rabin2011barycenter,
  author = {Julien Rabin and Gabriel Peyr{\'{e}} and Julie Delon and Marc Bernot},
  title = {Wasserstein Barycenter and Its Application to Texture Mixing},
  booktitle = {Scale Space and Variational Methods in Computer Vision - Third International Conference, {SSVM} 2011, Ein-Gedi, Israel, May 29 - June 2, 2011, Revised Selected Papers},
  year = {2011},
  publisher = {Springer},
  address = {Berlin, Heidelberg},
  pages = {435--446},
  isbn = {978-3-642-24785-9},
  timestamp = {Mon, 22 May 2017 01:00:00 +0200},
  biburl = {https://dblp.org/rec/conf/scalespace/RabinPDB11.bib},
  bibsource = {dblp computer science bibliography, https://dblp.org},
  doi = {10.1007/978-3-642-24785-9_37},
  volume = {6667},
  _bib2doi_selected = {dblp:/rec/conf/scalespace/RabinPDB11.bib},
  _bib2doi_confirmed = {true},
  _bib2doi_finished = {true},
}

@misc{nguyen2025introductionslicedoptimaltransport,
  title = {An Introduction to Sliced Optimal Transport},
  author = {Khai Nguyen},
  year = {2025},
  eprint = {2508.12519},
  archiveprefix = {arXiv},
  primaryclass = {stat.ML},
  url = {https://arxiv.org/abs/2508.12519},
  doi = {10.48550/arXiv.2508.12519},
  timestamp = {Tue, 23 Sep 2025 01:00:00 +0200},
  biburl = {https://dblp.org/rec/journals/corr/abs-2508-12519.bib},
  bibsource = {dblp computer science bibliography, https://dblp.org},
  _bib2doi_selected = {dblp:/rec/journals/corr/abs-2508-12519.bib},
  _bib2doi_confirmed = {true},
}

@inproceedings{bai2023sliced,
  title = {Sliced optimal partial transport},
  author = {Bai, Yikun and Schmitzer, Bernhard and Thorpe, Matthew and Kolouri, Soheil},
  booktitle = {Proceedings of the IEEE/CVF Conference on Computer Vision and Pattern Recognition},
  pages = {13681--13690},
  year = {2023},
  timestamp = {Tue, 29 Aug 2023 01:00:00 +0200},
  biburl = {https://dblp.org/rec/conf/cvpr/BaiSTK23.bib},
  bibsource = {dblp computer science bibliography, https://dblp.org},
  doi = {10.1109/CVPR52729.2023.01315},
  _bib2doi_selected = {dblp:/rec/conf/cvpr/BaiSTK23.bib},
  _bib2doi_confirmed = {true},
}

@book{santambrogio2015optimal,
  title = {Optimal Transport for Applied Mathematicians: Calculus of Variations, PDEs, and Modeling},
  author = {Santambrogio, Filippo},
  isbn = {9783319208282},
  series = {Progress in Nonlinear Differential Equations and Their Applications},
  url = {https://books.google.com/books?id=UOHHCgAAQBAJ},
  year = {2015},
  publisher = {Springer International Publishing},
  doi = {10.1007/978-3-319-20828-2},
  _bib2doi_finished = {true},
}

@inproceedings{DBLP:conf/iclr/ChapelT25,
  author = {Laetitia Chapel and Romain Tavenard},
  title = {One for all and all for one: Efficient computation of partial Wasserstein distances on the line},
  booktitle = {The Thirteenth International Conference on Learning Representations, {ICLR} 2025, Singapore, April 24-28, 2025},
  publisher = {OpenReview.net},
  year = {2025},
  url = {https://openreview.net/forum?id=kzEPsHbJDv},
  timestamp = {Thu, 10 Jul 2025 01:00:00 +0200},
  biburl = {https://dblp.org/rec/conf/iclr/ChapelT25.bib},
  bibsource = {dblp computer science bibliography, https://dblp.org},
  _bib2doi_selected = {dblp:/rec/conf/iclr/ChapelT25.bib},
  _bib2doi_confirmed = {true},
}

@inproceedings{williams2014FasterAllpairsShortest,
  title = {Faster All-Pairs Shortest Paths via Circuit Complexity},
  booktitle = {Proceedings of the Forty-Sixth Annual {{ACM}} Symposium on {{Theory}} of Computing},
  author = {Williams, Ryan},
  date = {2014-05-31},
  year = {2014},
  series = {{{STOC}} '14},
  pages = {664--673},
  doi = {10.1145/2591796.2591811},
  isbn = {978-1-4503-2710-7},
  timestamp = {Wed, 14 Nov 2018 00:00:00 +0100},
  biburl = {https://dblp.org/rec/conf/stoc/Williams14a.bib},
  bibsource = {dblp computer science bibliography, https://dblp.org},
  _bib2doi_selected = {dblp:/rec/conf/stoc/Williams14a.bib},
  _bib2doi_confirmed = {true},
}

@article{DBLP:journals/talg/CyganMWW19,
  author = {Marek Cygan and Marcin Mucha and Karol Wegrzycki and Michal Wlodarczyk},
  title = {On Problems Equivalent to (min, +)-Convolution},
  journal = {{ACM} Trans. Algorithms},
  volume = {15},
  number = {1},
  pages = {14:1--14:25},
  year = {2019},
  url = {https://doi.org/10.1145/3293465},
  doi = {10.1145/3293465},
  timestamp = {Tue, 16 Aug 2022 01:00:00 +0200},
  biburl = {https://dblp.org/rec/journals/talg/CyganMWW19.bib},
  bibsource = {dblp computer science bibliography, https://dblp.org},
  _bib2doi_selected = {dblp:/rec/journals/talg/CyganMWW19.bib},
  _bib2doi_confirmed = {true},
}

@inproceedings{LIPIcs.ICALP.2018.62,
  author = {Gawrychowski, Pawel and Uznanski, Przemyslaw},
  title = {{Towards Unified Approximate Pattern Matching for Hamming and L\underline1 Distance}},
  booktitle = {45th International Colloquium on Automata, Languages, and Programming (ICALP 2018)},
  pages = {62:1--62:13},
  series = {Leibniz International Proceedings in Informatics (LIPIcs)},
  isbn = {978-3-95977-076-7},
  issn = {1868-8969},
  year = {2018},
  volume = {107},
  editor = {Chatzigiannakis, Ioannis and Kaklamanis, Christos and Marx, D\'{a}niel and Sannella, Donald},
  publisher = {Schloss Dagstuhl -- Leibniz-Zentrum f{\"u}r Informatik},
  address = {Dagstuhl, Germany},
  url = {https://drops.dagstuhl.de/entities/document/10.4230/LIPIcs.ICALP.2018.62},
  urn = {urn:nbn:de:0030-drops-90669},
  doi = {10.4230/LIPIcs.ICALP.2018.62},
  annote = {Keywords: approximate pattern matching, conditional lower bounds, L\underline1 distance, Hamming distance},
  timestamp = {Thu, 14 Oct 2021 01:00:00 +0200},
  biburl = {https://dblp.org/rec/conf/icalp/GawrychowskiU18.bib},
  bibsource = {dblp computer science bibliography, https://dblp.org},
  _bib2doi_selected = {dblp:/rec/conf/icalp/GawrychowskiU18.bib},
  _bib2doi_confirmed = {true},
}

@misc{boneh2024hammingdistanceoracle,
  title = {Hamming Distance Oracle},
  author = {Itai Boneh and Dvir Fried and Shay Golan and Matan Kraus},
  year = {2024},
  eprint = {2407.05430},
  archiveprefix = {arXiv},
  primaryclass = {cs.DS},
  url = {https://arxiv.org/abs/2407.05430},
  doi = {10.48550/arXiv.2407.05430},
  timestamp = {Mon, 12 Aug 2024 01:00:00 +0200},
  biburl = {https://dblp.org/rec/journals/corr/abs-2407-05430.bib},
  bibsource = {dblp computer science bibliography, https://dblp.org},
  _bib2doi_selected = {dblp:/rec/journals/corr/abs-2407-05430.bib},
  _bib2doi_confirmed = {true},
}

@article{DBLP:journals/algorithmica/BremnerCDEHILPT14,
  author = {David Bremner and Timothy M. Chan and Erik D. Demaine and Jeff Erickson and Ferran Hurtado and John Iacono and Stefan Langerman and Mihai P{\u{a}}tra{\c{s}}cu and Perouz Taslakian},
  title = {Necklaces, Convolutions, and {X+Y}},
  journal = {Algorithmica},
  volume = {69},
  number = {2},
  pages = {294--314},
  year = {2014},
  url = {https://doi.org/10.1007/s00453-012-9734-3},
  doi = {10.1007/s00453-012-9734-3},
  timestamp = {Fri, 30 Nov 2018 00:00:00 +0100},
  biburl = {https://dblp.org/rec/journals/algorithmica/BremnerCDEHILPT14.bib},
  bibsource = {dblp computer science bibliography, https://dblp.org},
  _bib2doi_selected = {dblp:/rec/journals/algorithmica/BremnerCDEHILPT14.bib},
  _bib2doi_confirmed = {true},
}

@article{rubner2000earth,
  title = {The earth mover's distance as a metric for image retrieval},
  author = {Rubner, Yossi and Tomasi, Carlo and Guibas, Leonidas J},
  journal = {International J. of Comput. Vision},
  volume = {40},
  number = {2},
  pages = {99--121},
  year = {2000},
  publisher = {Springer Nature BV},
  timestamp = {Fri, 13 Mar 2020 00:00:00 +0100},
  biburl = {https://dblp.org/rec/journals/ijcv/RubnerTG00.bib},
  bibsource = {dblp computer science bibliography, https://dblp.org},
  doi = {10.1023/A:1026543900054},
  _bib2doi_selected = {dblp:/rec/journals/ijcv/RubnerTG00.bib},
  _bib2doi_confirmed = {true},
  _bib2doi_finished = {true},
}

@article{peyre2019computational,
  author = {Gabriel Peyr{\'{e}} and Marco Cuturi},
  title = {Computational Optimal Transport},
  journal = {Found. Trends Mach. Learn.},
  volume = {11},
  number = {5-6},
  pages = {355--607},
  year = {2019},
  url = {https://doi.org/10.1561/2200000073},
  doi = {10.1561/2200000073},
  timestamp = {Thu, 18 Jun 2020 01:00:00 +0200},
  biburl = {https://dblp.org/rec/journals/ftml/PeyreC19.bib},
  bibsource = {dblp computer science bibliography, https://dblp.org},
  _bib2doi_selected = {dblp:/rec/journals/ftml/PeyreC19.bib},
  _bib2doi_confirmed = {true},
}

@book{villani2009optimal,
  title = {Optimal transport: old and new},
  author = {Villani, C{\'e}dric},
  volume = {338},
  year = {2009},
  publisher = {Springer},
  doi = {10.1007/978-3-540-71050-9},
  _bib2doi_finished = {true},
}

@article{chapel2020partial,
  title = {Partial optimal tranport with applications on positive-unlabeled learning},
  author = {Chapel, Laetitia and Alaya, Mokhtar Z and Gasso, Gilles},
  journal = {Advances in Neural Information Processing Systems},
  volume = {33},
  pages = {2903--2913},
  year = {2020},
  timestamp = {Tue, 19 Jan 2021 00:00:00 +0100},
  biburl = {https://dblp.org/rec/conf/nips/ChapelAG20.bib},
  bibsource = {dblp computer science bibliography, https://dblp.org},
  url = {https://proceedings.neurips.cc/paper/2020/hash/1e6e25d952a0d639b676ee20d0519ee2-Abstract.html},
  _bib2doi_selected = {dblp:/rec/conf/nips/ChapelAG20.bib},
  _bib2doi_confirmed = {true},
  _bib2doi_finished = {true},
}

@inproceedings{raghvendra2024new,
  title = {A New Robust Partial p-Wasserstein-Based Metric for Comparing Distributions},
  author = {Raghvendra, Sharath and Shirzadian, Pouyan and Zhang, Kaiyi},
  booktitle = {41st Internat. Conference on Machine Learning},
  year = {2024},
  timestamp = {Mon, 09 Feb 2026 00:00:00 +0100},
  biburl = {https://dblp.org/rec/conf/icml/RaghvendraS024.bib},
  bibsource = {dblp computer science bibliography, https://dblp.org},
  url = {https://proceedings.mlr.press/v235/raghvendra24a.html},
  _bib2doi_selected = {dblp:/rec/conf/icml/RaghvendraS024.bib},
  _bib2doi_confirmed = {true},
  _bib2doi_finished = {true},
}

@inproceedings{mukherjee2021outlier,
  title = {Outlier-robust optimal transport},
  author = {Mukherjee, Debarghya and Guha, Aritra and Solomon, Justin M and Sun, Yuekai and Yurochkin, Mikhail},
  booktitle = {Internat. Conference on Machine Learning},
  pages = {7850--7860},
  year = {2021},
  organization = {PMLR},
  timestamp = {Wed, 25 Aug 2021 01:00:00 +0200},
  biburl = {https://dblp.org/rec/conf/icml/MukherjeeGSSY21.bib},
  bibsource = {dblp computer science bibliography, https://dblp.org},
  url = {http://proceedings.mlr.press/v139/mukherjee21a.html},
  _bib2doi_selected = {dblp:/rec/conf/icml/MukherjeeGSSY21.bib},
  _bib2doi_confirmed = {true},
}

@inproceedings{phatak2023computing,
  title = {Computing all Optimal Partial Transports},
  author = {Phatak, Abhijeet and Raghvendra, Sharath and Tripathy, Chittaranjan and Zhang, Kaiyi},
  booktitle = {Proc. 11th Internat. Conference on Learning Representations},
  year = {2023},
  timestamp = {Wed, 24 Jul 2024 01:00:00 +0200},
  biburl = {https://dblp.org/rec/conf/iclr/PhatakRTZ23.bib},
  bibsource = {dblp computer science bibliography, https://dblp.org},
  url = {https://openreview.net/forum?id=gwcQajoXNF},
  _bib2doi_selected = {dblp:/rec/conf/iclr/PhatakRTZ23.bib},
  _bib2doi_confirmed = {true},
  _bib2doi_finished = {true},
}

@article{km_hungarian,
  title = {The Hungarian method for the assignment problem},
  author = {Kuhn, Harold W},
  journal = {Naval research logistics quarterly},
  volume = {2},
  number = {1-2},
  pages = {83--97},
  year = {1955},
  publisher = {Wiley Online Library},
  doi = {10.1002/nav.3800020109},
  _bib2doi_finished = {true},
}

@article{vaidya1989geometry,
  title = {Geometry helps in matching},
  author = {Vaidya, Pravin M},
  journal = {SIAM Journal of Computing},
  volume = {18},
  number = {6},
  pages = {1201--1225},
  year = {1989},
  publisher = {SIAM},
  timestamp = {Sat, 27 May 2017 01:00:00 +0200},
  biburl = {https://dblp.org/rec/journals/siamcomp/Vaidya89a.bib},
  bibsource = {dblp computer science bibliography, https://dblp.org},
  doi = {10.1137/0218080},
  _bib2doi_selected = {dblp:/rec/journals/siamcomp/Vaidya89a.bib},
  _bib2doi_confirmed = {true},
}

@article{aes_sjc99,
  author = {Agarwal, Pankaj K. and Efrat, Alon and Sharir, Micha},
  title = {Vertical Decomposition of Shallow Levels in 3-Dimensional Arrangements and Its Applications},
  journal = {SIAM Journal of Computing},
  issue_date = {Dec. 1999 to Jan. 2000},
  volume = {29},
  number = {3},
  month = {dec},
  year = {1999},
  timestamp = {Sat, 30 Sep 2023 01:00:00 +0200},
  biburl = {https://dblp.org/rec/journals/siamcomp/AgarwalES99.bib},
  bibsource = {dblp computer science bibliography, https://dblp.org},
  doi = {10.1137/S0097539795295936},
  _bib2doi_selected = {dblp:/rec/journals/siamcomp/AgarwalES99.bib},
  _bib2doi_confirmed = {true},
}

@inproceedings{DBLP:conf/stoc/AbboudFKLM24,
  author = {Amir Abboud and Nick Fischer and Zander Kelley and Shachar Lovett and Raghu Meka},
  editor = {Bojan Mohar and Igor Shinkar and Ryan O'Donnell},
  title = {New Graph Decompositions and Combinatorial Boolean Matrix Multiplication Algorithms},
  booktitle = {Proceedings of the 56th Annual {ACM} Symposium on Theory of Computing, {STOC} 2024, Vancouver, BC, Canada, June 24-28, 2024},
  pages = {935--943},
  publisher = {{ACM}},
  year = {2024},
  url = {https://doi.org/10.1145/3618260.3649696},
  doi = {10.1145/3618260.3649696},
  timestamp = {Tue, 18 Jun 2024 01:00:00 +0200},
  biburl = {https://dblp.org/rec/conf/stoc/AbboudFKLM24.bib},
  bibsource = {dblp computer science bibliography, https://dblp.org},
  _bib2doi_selected = {dblp:/rec/conf/stoc/AbboudFKLM24.bib},
  _bib2doi_confirmed = {true},
}

@article{DBLP:journals/algorithmica/RodittyZ11,
  author = {Liam Roditty and Uri Zwick},
  title = {On Dynamic Shortest Paths Problems},
  journal = {Algorithmica},
  volume = {61},
  number = {2},
  pages = {389--401},
  year = {2011},
  url = {https://doi.org/10.1007/s00453-010-9401-5},
  doi = {10.1007/s00453-010-9401-5},
  timestamp = {Sun, 02 Jun 2019 01:00:00 +0200},
  biburl = {https://dblp.org/rec/journals/algorithmica/RodittyZ11.bib},
  bibsource = {dblp computer science bibliography, https://dblp.org},
  _bib2doi_selected = {dblp:/rec/journals/algorithmica/RodittyZ11.bib},
  _bib2doi_confirmed = {true},
}

@book{oppenheim1999discrete,
  title = {Discrete-time signal processing},
  author = {Oppenheim, Alan V},
  year = {1999},
  publisher = {Pearson Education India},
  _bib2doi_finished = {true},
}

@article{cooley1965algorithm,
  title = {An algorithm for the machine calculation of complex Fourier series},
  author = {Cooley, James W and Tukey, John W},
  journal = {Mathematics of computation},
  volume = {19},
  number = {90},
  pages = {297--301},
  year = {1965},
  publisher = {JSTOR},
  doi = {10.7551/mitpress/5222.003.0014},
  _bib2doi_finished = {true},
}

% \newpage
\appendix

\section{Omitted Proofs}

\subsection{Proof of \cref{lem:hungarianprop}}\label{sec:proof:hungarianprop}
\begin{proof}
The symmetric difference of two matchings is a vertex disjoint union of alternating cycles and paths. Observe that $M_k \oplus M_{k+1}$ cannot contain a cycle or non-augmenting path, otherwise we would either contradict the optimality of $M_k$ or $M_{k+1}$ or contradict the choice of $M_{k+1}$. If $M_k \oplus M_{k+1}$ contains at least $2$ augmenting paths, it must contain a path $P$ which is augmenting for $M_k$ and a path $P'$ which is augmenting for $M_{k+1}$ such that $\mathrm{cost}(M_{k+1}) = \mathrm{cost}(M_{k+1}\oplus P \oplus P')$, otherwise either $M_k$ or $M_{k+1}$ was not optimal. However $|(M_{k+1}\oplus P \oplus P') \cap M_k | > |M_{k+1} \cap M_k|$ which would contradict the choice of $M_{k+1}$.
\end{proof}

\subsection{Proof of \cref{lem:net_cost}}
\label{sec:proof_lem_net_cost}
\begin{proof}
Let $M$ be a minimum-cost $k$-matching and let $P$ be a minimum net-cost
augmenting path with respect to $M$. Let $b \in B_F$ and $r \in R_F$ be its
endpoints, and write $I_{b,r} = [\min\{b,r\}, \max\{b,r\}]$.

\medskip
First, we show that $b$ and $r$ form an adacent pair of free points.
Consider the matching $M' = M \oplus P$. By Lemma~\ref{lem:hungarianprop}, both $M$ and $M'$ are minimum-cost matchings of sizes
$k$ and $k+1$, respectively, and every edge of $P$ belongs either to $M$ or to
$M'$. By Observation~1, no edge of a minimum-cost matching can contain a free
point in its interior. If there were a free point strictly between $b$ and $r$,
then some edge of $P$ would have to cross over that free point, contradicting
Observation~1. Hence there is no free point in the interior of $I_{b,r}$, and
$(b,r)$ is an adjacent free pair by definition.

Next, we argue that all edges of $P$ are contained inside the interval $I_{b,r}$. 
Suppose otherwise, that some edge of $P$ crosses either $b$ or $r$.  By Observation~1, this edge
cannot be a matching edge of~$M$.  Without loss of generality, suppose that there is an edge that crosses $r$, and let 
$(b',r')$ with $b' < r < r'$ be the first (starting from $b$) such edge of $P$. Along $P$, the subpath from $r'$ to $r$ is an alternating path of non-negative net-cost (otherwise, alternating along this subpath alone would reduce
the cost of $M$, contradicting its optimality). Because $r$ lies strictly between $b'$ and $r'$ on the line, we have
$\|b' - r\|^p < \|b' - r'\|^p$.  Replacing the subpath from $b'$ to $r$ in $P$ by the edge $(b',r)$
thus yields another augmenting path from $b$ to $r$ with net-cost strictly smaller than that of~$P$, a contradiction.

% Next, we argue that all edges of $P$ are contained inside the interval $I_{b,r}$.
% Suppose, for contradiction, that no minimum net-cost augmenting path lies
% entirely within $I_{b,r}$.  
% Among all minimum net-cost augmenting paths, choose $P$ to minimize the number
% of crossings of the boundary of $I_{b,r}$.  
% Then some edge of $P$ must cross either $b$ or $r$.  By Observation~1, this edge
% cannot be a matching edge of~$M$, so it must be a non-matching edge.  
% Without loss of generality, assume the edge crosses $r$, so it is of the form
% $(b',r')$ with $b' < r < r'$.

% Along $P$, the subpath from $r'$ to $r$ is an alternating path.  Its net-cost
% cannot be negative; otherwise, alternating along this subpath alone would reduce
% the cost of $M$, contradicting its optimality.  
% Therefore, the net-cost of the subpath of $P$ starting from $b$ to $b'$ followed by the edge
% $(b',r')$ is at most the minimum net-cost value.

% Because $r$ lies strictly between $b'$ and $r'$ on the line, we have
% $\|b' - r\|^p < \|b' - r'\|^p$.  Replacing the edge $(b',r')$ in $P$ by $(b',r)$
% thus yields another augmenting path from $b$ to $r$ with net-cost no larger than
% that of~$P$, but with strictly fewer crossings of the interval boundary.  This
% contradicts the choice of $P$.  
% Hence every edge of $P$ lies within $I_{b,r}$.

Since $P$ is contained in $I_{b,r}$, the only edges of $M$ affected by the
augmentation lie inside this interval.  By the interval representation
(Property~\ref{property:free-pts}), at most one interval $I \in \mathcal{I}_M$ is contained in
$I_{b,r}$.
If no such interval exists, then $I_{b,r}$ contains only the two free endpoints
$b$ and $r$, so $P$ consists of the single edge $(b,r)$ and
$\Phi(P)=\|b-r\|^p$.
If such an interval $I$ exists, then inside $I$ the matching $M$ coincides with
the optimal monotone matching of cost $w^p(I)$ (Property~\ref{property:monotone}), while after
augmenting along $P$ the matching inside $I_{b,r}$ becomes the optimal monotone
matching on that larger interval, with cost $w^p(I_{b,r})$.  Thus
$\Phi(P)=w^p(I_{b,r}) - w^p(I)$.
\end{proof}

\subsection{Proof of \cref{thm:max-minus-convolution}}\label{sec:proof:max-minus-convolution}
\begin{proof}
  We provide a linear time reduction from the \((\min, +)\)-convolution problem.
  Let \(A\) and \(B\) be \(n\)-element integer sequences with elements in the range \([-M, M]\) and denote their \((\min, +)\)-convolution with \(C\).
  Now set
  \[
    X[i] \coloneqq 3iM - A[i]\quad \text{and}\quad Y[j] \coloneqq -3(j + 1)M + B[j].
  \]
  Note that \(X[i] \in [3iM - M, 3iM + M]\).
  Since \(3iM + M < 3(i + 1)M - M\), \(X\) is increasing.
  Similarly, \(Y\) is decreasing.
  Let \(Z\) be the \((\max, -)\)-convolution of \(X\) and \(Y\).
  For all \(k \in \{0, \dots, n-1\}\), we have
  \begin{align*}
    Z[k]&=\max_{i+j=k} X[i] - Y[j]\\
    &=\max_{i+j=k} 3iM - A[i] -(-3(j+1)M + B[j])\\
    &=3(k+1)M + \max_{i+j=k} (- A[i] - B[j])\\
    &=3(k+1)M - \min_{i+j=k} (A[i] + B[j])\\
    &=3(k+1)M - C[k].
  \end{align*}
  Therefore, given \(Z\), we can recover \(C\) in linear time.
  Notice that all elements of \(X\) and \(Y\) are contained in \([-6Mn, 6Mn]\).
  So, if there was an \(\varepsilon>0\) such that we can compute \(Z\) in time \(O(n^{2-\varepsilon}\polylog (6Mn))\), we directly get an \(O(n^{2-\varepsilon/2}\polylog M)\) algorithm for computing \(C\), which is ruled out by the \((\min, +)\)-convolution conjecture.
\end{proof}

\section{Fast Fourier Transform}\label{appendix:fft}

In this section, we elaborate on the specific way, we apply the fast Fourier transform to compute $W(\mathrm{left}_S,\mathrm{right}_S,d)$.

Let $S = [a,b]$ be a node of the range tree. Suppose the red points in
$\mathrm{left}_S$ are $r_u,\dots,r_v$ in the global red order, and the
blue points in $\mathrm{right}_S$ are $b_s,\dots,b_t$ in the global blue
order. For any shift $d$, the value $W(\mathrm{left}_S,\mathrm{right}_S,d)$
is the sum of $|b_{j+d} - r_j|^p$ over all $j$ with
$j \in [u,v]$ and $j+d \in [s,t]$. Since all red points in
$\mathrm{left}_S$ lie to the left of all blue points in
$\mathrm{right}_S$, we always have $b_{j+d} > r_j$, so
$|b_{j+d} - r_j|^p = (b_{j+d} - r_j)^p$.

We compute these values simultaneously for all relevant shifts $d$ by
expanding each term using the binomial identity
\[
    (b_{j+d} - r_j)^p
    = \sum_{c=0}^p \binom{p}{c} (-1)^c r_j^c\, b_{j+d}^{\,p-c}.
\]
Thus, to obtain $W(\mathrm{left}_S,\mathrm{right}_S,d)$, it suffices to
compute, for each $c \in \{0,\dots,p\}$, the sum of $r_j^c\, b_k^{\,p-c}$
over all pairs $(j,k)$ with $j \in [u,v]$, $k \in [s,t]$, and
$k - j = d$.

To accomplish this efficiently, we construct polynomials encoding the
sequences of red and blue values. For each $c$, define
\[
    P_c(x)
    = r_u^c x^{\,v-u}
      + r_{u+1}^c x^{\,v-u-1}
      + \cdots
      + r_v^c,
\]
so that the $(i$th$)$ red point from the left contributes to a monomial
determined by its offset from $v$. Similarly define
\[
    Q_c(x)
    = b_s^{\,p-c}
      + b_{s+1}^{\,p-c} x
      + \cdots
      + b_t^{\,p-c} x^{\,t-s},
\]
so that the $(i$th$)$ blue point from the left contributes to
degree~$i$.
The product $P_c(x)Q_c(x)$ can be computed in
$O((b-a)\log(b-a))$ time using FFT. The coefficient of $x^\ell$ in this
product equals
\[
    \sum_{\substack{j \in [u,v],\, k \in [s,t]\\
                    (v-j) + (k-s) = \ell}} r_j^c \, b_k^{\,p-c}.
\]
The condition $(v-j) + (k-s) = \ell$ is equivalent to
$k - j = \ell + s - v$. Thus the coefficient of $x^\ell$ aggregates
exactly the contributions of all pairs $(r_j,b_k)$ in
$\mathrm{left}_S \times \mathrm{right}_S$ whose shift is
$d = \ell + s - v$.
Summing over all $c$ with the appropriate binomial coefficients, we
define
\[
    H(x) = \sum_{c=0}^p \binom{p}{c} (-1)^c\, P_c(x)Q_c(x).
\]
By construction, the coefficient of $x^\ell$ in $H(x)$ is precisely
$W(\mathrm{left}_S,\mathrm{right}_S,d)$ for $d = \ell + s - v$. Thus, to
store all cross-term values for node $S$, we store the coefficients of
$H(x)$ indexed by their corresponding shifts~$d$.

The total time to compute $W(\mathrm{left}_S,\mathrm{right}_S,d)$ for all
relevant $d$ is $O((b-a)\log(b-a))$.

\section{Omitted Lower Bounds}
\label{sec:omitted_lbs}

\subsection{Two Intervals Matching Problem}

We show two more lower bounds.
Although the data structure in our balanced-interval setting admits a near-linear-time solution for any fixed $p$, a slightly more general version of the data structure might not exist with similar performance bounds. 
In this more general formulation, the data structure is given $p \in \mathbb{N}$ and point sets $B, R \subset \mathbb{R}$, and it must answer queries of the following form: 
given two intervals $I_B, I_R \subset \mathbb{R}$ with $|B \cap I_B| = |R \cap I_R|$, return the $p$-Wasserstein distance between $B \cap I_B$ and $R \cap I_R$. 
We refer to this as the \emph{Two Intervals Matching Problem}. 

If only combinatorial\footnote{Roughly speaking, these are algorithms that avoid algebraic techniques such as fast matrix multiplication or FFT-based methods.} methods are allowed, we present a lower bound based on the Combinatorial Matrix Multiplication Conjecture. This conjecture states that no truly subcubic \emph{combinatorial} matrix multiplication algorithm exists.
\begin{restatable}[]{theorem}{hamminglowerbound}\label{thm:hamminglowerbound}
    Unless the Combinatorial Matrix Multiplication Conjecture fails, for no \(p > 1\) and \(\varepsilon > 0\) there is a combinatorial data structure for the Two Intervals Matching Problem with preprocessing time $P$ and query time $Q$ such that $P \cdot Q \in O((|B| \cdot |R|)^{1-\eps})$.
\end{restatable}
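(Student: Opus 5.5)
The plan is to reduce combinatorial Boolean matrix multiplication to the Two Intervals Matching Problem via the standard route through Hamming-distance / sliding-window queries, since the paper cites the Hamming distance oracle of Boneh et al.\ and Gawrychowski--Uzna\'nski.

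\emph{Step 1.} Show that a two-interval matching query computes a Hamming-type inner product. Given binary strings $x,y$, encode each bit as a point on the line: $x_i$ becomes a blue point at $2i + \delta\,x_i$, and $y_j$ becomes a red point at $2j + \delta\,y_j$, for a small $\delta>0$. Let $I_B$ cover the blue points with indices $i_0,\dots,i_0+m-1$ and $I_R$ the red points with indices $j_0,\dots,j_0+m-1$. With $|B\cap I_B|=|R\cap I_R|$, the optimal matching between two sets on the line pairs them in sorted order, so it pairs index $i_0+t$ with $j_0+t$. The cost of such a pair is $|2(i_0-j_0)+\delta(x_{i_0+t}-y_{j_0+t})|^p$. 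It takes only three values, depending on whether the bit difference is $-1$, $0$ or $+1$. Writing $D=2(i_0-j_0)$, we have $|D+\delta|^p+|D-\delta|^p\neq 2|D|^p$ when $p>1$ by strict convexity. Similarly, for $D=0$ we get $\delta^p$ versus $0$.

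To recover a symmetric count such as the number of mismatches, issue two or three queries on rescaled copies, or superimpose a second copy with different $\delta$. The goal is to obtain the Hamming distance, or more directly the count $\sum_t x_{i_0+t}y_{j_0+t}$, from $O(1)$ query answers by solving a small linear system. This is where $p>1$ is used: for $p=1$ the costs are affine in the bit difference and cancel out. Because $|D\pm\delta|^p$ is not affine in the bits, the count becomes identifiable.

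\emph{Step 2.} Reduce Boolean matrix multiplication to such queries. Given $n\times n$ Boolean matrices $A,C$, concatenate the rows of $A$ into one string $x$ and the columns of $C$ into one string $y$, each of length $n^2$, separated by padding. Then $(AC)_{ik}$ is decided by one inner-product query over the window for row $i$ and the window for column $k$. This gives $|B|=|R|=\Theta(n^2)$, and the $n^2$ queries need $P + n^2 Q$ time. To balance this, use the usual rectangular trade-off. Split $A$ into an $n\times n^{\alpha}$ block layout, or more simply build $g$ independent structures on suitably sized pieces, so that the total time is $g\cdot P + n^{2}\cdot(\text{queries per entry})\cdot Q$.

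Choose the parameters so that $P\cdot Q\in O((|B||R|)^{1-\eps})$ yields total time $O(n^{3-\eps'})$. The plan is to take $N=|B|=|R|$, use the AM--GM-style choice of the number of copies as roughly $\sqrt{n^{2}Q/P}\cdot$(scaling), and verify that the resulting bound is $n^{3-\Omega(\eps)}$. Since the reduction uses only arithmetic on the returned values, a combinatorial data structure would yield a combinatorial subcubic BMM algorithm, contradicting the conjecture.

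\emph{Main obstacle.} The hardest step will be Step 1. Real costs $|D\pm\delta|^p$ must be decoded exactly into an integer count, and the windows must be arranged so that only aligned pairs of the same row/column are matched. Cross-window pairs also carry the large offset $D$, so I would make all rows sit at a common relative offset by placing padding so that $D$ is the same fixed value for every query. I would also pick $\delta$ so that the three per-pair costs are affinely independent in the two unknown counts (mismatches of each sign). If exact decoding with reals is delicate, I would instead choose integer coordinates and $p$-th powers of small integers, then argue by uniqueness of the linear system.
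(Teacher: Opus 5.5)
\paragraph{Gap in Step~1.} The gap is in Step~1, exactly where you locate the main obstacle, and your proposed repairs do not close it. With one point per bit, the answer for windows at offset $D=2(i_0-j_0)\neq 0$ is
\[
m|D|^p+b_+\Delta_++b_-\Delta_-,
\]
where $b_\pm$ count the two kinds of mismatch and $\Delta_\pm=|D\pm\delta|^p-|D|^p$ with $\Delta_+\neq\Delta_-$. A single scalar therefore does not determine $b_++b_-$, nor $\sum_t x_{i_0+t}y_{j_0+t}$.

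\emph{Repeated queries.} Asking the same structure about the same window pair again returns the same number, so there is no ``small linear system'' to solve.

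\emph{Fixing $D$.} Forcing $D$ to be one fixed value for every query is impossible with a linear-size encoding, since all row windows and all column windows would have to sit at one pair of positions. It would also be self-defeating. At a single alignment every answer is a range sum of one fixed sequence, computable by prefix sums in $O(1)$ time after $O(N)$ preprocessing, so no lower bound can come from such queries.

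In fact the varying offset is harmless. $D$ is known from the query, and hence so are $|D|^p$ and $\Delta_\pm$.

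\paragraph{The missing idea.} What is missing is a way to make the two mismatch types cost the same. The paper uses the symmetric two-point gadgets $G(0)=\{1,4\}$ and $G(1)=\{2,3\}$, and shifts $R$ by $4n$ so that it lies entirely to the right of $B$. A match then costs $Y(M)=2M^p$ and either mismatch costs $Z(M)=(M+1)^p+(M-1)^p$. So a query returns $kZ(M)+(\ell-k)Y(M)$, where $M$ is known and $Z(M)>Y(M)$ for $p>1$. The Hamming distance $k$ is read off from one query.

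Your encoding can also be rescued. Note that $b_+-b_-=\sum_t x_{i_0+t}-\sum_t y_{j_0+t}$ is a difference of prefix sums of the two bit strings. Then $b_++b_-$ follows, since $\Delta_++\Delta_->0$ by strict convexity. Alternatively, add the answer of a second structure built with $-\delta$: the two answers sum to $2m|D|^p+(b_++b_-)(\Delta_++\Delta_-)$.

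\paragraph{Step~2.} Step~2 is unnecessary. Since $|B|=\Theta(|S|)$ and $|R|=\Theta(|T|)$, the paper invokes the Hamming Distance Oracle lower bound of Boneh et al.\ (their Theorem~3) as a black box.

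Re-deriving it as you sketch needs the rectangular form of the conjecture in earnest, i.e., splitting the inner dimension. For example, suppose $Q=O(1)$ and $P=\Theta((|B||R|)^{1-\varepsilon})$ with $\varepsilon<1/4$. Then any layout answering each entry by one query over a full row and column costs $\Omega(n^{4-4\varepsilon})$. So the accounting $g\cdot P+n^2Q$ alone does not give a subcubic algorithm.
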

When non-combinatorial techniques are permitted, we can still rule out almost-linear preprocessing time with $n^{o(1)}$ query time unless matrix multiplication can be performed in almost-quadratic time (i.e., $\omega = 2$).
More precisely, we obtain the following lower bound:
\begin{restatable}[]{theorem}{hamminglowerboundNonCombinatorial}\label{thm:hamminglowerboundNonCombinatorial}
    Let \(n \coloneqq |B| = |R|\).
    Any data structure for the Two Intervals Matching Problem with preprocessing time $P$ and query time $Q$ satisfies $P + nQ \in \Omega(n^{\omega/2})$, where \(\omega\) denotes the Matrix Multiplication Exponent.
\end{restatable}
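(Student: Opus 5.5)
We would prove \Cref{thm:hamminglowerboundNonCombinatorial} by reducing matrix multiplication to a batch of $n$ queries to the Two Intervals Matching data structure. Suppose a data structure with preprocessing $P$ and query time $Q$ exists. Given an instance of a product we want to compute, we build point sets $B,R$ of size $n$, preprocess once, and issue $O(n)$ queries. From the answers we read off the output. This gives an algorithm of total time $P+O(n)\cdot Q$ plus linear overhead. If the output we recover requires $\Omega(n^{\omega/2})$ time, that is, it is a matrix product of square matrices of dimension $\sqrt n$, then $P+nQ\in\Omega(n^{\omega/2})$.

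\textbf{Encoding.} Take $\sqrt n\times\sqrt n$ matrices $A,C$; the target is all $\sqrt n\cdot\sqrt n=n$ entries of $AC$, which takes $n^{\omega/2}$ time. Take $p=2$ (any fixed $p\ge 2$ works similarly via the binomial expansion). Concatenate the rows of $A$ into one sequence $a$ of length $n$, placed as red points in increasing positions. Concatenate the columns of $C$ into a sequence $c$, placed as blue points. To keep the points sorted while carrying arbitrary values, we place the $i$-th point at position $L\cdot i+\text{value}$ for a large $L$; sorted order then coincides with index order. For equal-size index windows $[i,i+m)$ in $R$ and $[j,j+m)$ in $B$, the optimal matching is in order, so the query returns
\[\sum_{t<m}(L(i-j)+a_{i+t}-c_{j+t})^2 .\]
Expanding, the only cross term that is not a prefix sum of $a$, $a^2$, $c$, $c^2$ is $-2\sum_t a_{i+t}c_{j+t}$. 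Those prefix sums can be computed in $O(n)$ time. Choosing the window as row $r$ of $A$ against column $s$ of $C$, with $m=\sqrt n$, therefore recovers $(AC)_{rs}$ from a single query plus $O(1)$ arithmetic. With $n$ entries this takes $n$ queries, giving total time $P+nQ+O(n)$.

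\textbf{Intervals and bit lengths.} The queries must be expressible as real intervals $I_R,I_B$. Because positions are increasing in index, an index window is exactly the set of points in a real interval, so this is immediate. The main obstacle is making the statement unconditional in the form $\Omega(n^{\omega/2})$: we must argue that computing all entries of a $\sqrt n\times\sqrt n$ product needs $n^{\omega/2-o(1)}$ time by the definition of $\omega$. We also need the reduction to stay in the arithmetic model, so that entries of size $\mathrm{poly}(n)$ times the input magnitude cost $O(1)$ per operation. We would state the result with $\omega$ understood as the infimum exponent, so the bound holds up to the usual $n^{o(1)}$ slack, and justify the word-size issue by taking integer matrices with polynomially bounded entries, which suffices for the definition of $\omega$.
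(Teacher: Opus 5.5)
Your reduction is correct, and it takes a different route from the paper's. The paper argues in two steps. First it reuses the encoding of Boneh et al.\ that packs two binary $n\times n$ matrices into binary strings of length $O(n^2)$, so that each entry of the product is read off from one Hamming-distance query. It then passes this through the gadget reduction of the combinatorial bound ($G(0)=\{1,4\}$, $G(1)=\{2,3\}$, with $R$ shifted by $4n$). There a query returns $kZ(M)+(\ell-k)Y(M)$ and hence reveals the Hamming distance $k$, and $N=n^2$ queries give $\Omega(N^{\omega/2})$.

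You instead reduce integer matrix multiplication directly, in one step. You encode entries as perturbations of the lattice $L\cdot i$ and use that, for $p=2$, the in-order cost splits into prefix-sum terms plus the single bilinear term $-2\sum_t a_{i+t}c_{j+t}$. Your version is self-contained, since it does not depend on the external string encoding. It also handles integer entries directly and is more explicit about the $n^{o(1)}$ slack hidden in $\omega$. The paper's version, in exchange, reuses one construction for both lower bounds and holds verbatim for every $p>1$, because it only needs $Z(M)\neq Y(M)$, i.e.\ strict convexity.

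Your aside that any $p\ge 2$ ``works similarly'' is not right as written. For $p\ge 3$ the expansion of $(D+a-c)^p$ contains further mixed monomials $a^kc^{p-k}$ that are neither prefix sums nor multiples of $\sum_t ac$. Moreover, in your layout the diagonal queries $r=s$ have $D=L(i-j)=0$, so nothing separates those terms by magnitude, and for odd $p$ the absolute value does not even drop. You can repair this by restricting to $0/1$ matrices and offsetting the blue points by $Ln$ so that $D\ge L$ for every query. Every mixed monomial then collapses to $ac$, with coefficient $2D^p-(D+1)^p-(D-1)^p<0$ — which is exactly the paper's gadget mechanism. Since the statement does not fix $p$, your $p=2$ argument already suffices for it.
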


%We now show lower bounds for data structures solving the Two Intervals Matching Problem, first focusing on the case that this data structure is combinatorial.
Our lower bound is based on the following conjecture that has been stated many times (\newstuff{see, e.g.,~\cite{LIPIcs.ICALP.2018.62,DBLP:journals/algorithmica/RodittyZ11, DBLP:conf/stoc/AbboudFKLM24})}:
% Our lower bound is based on the following conjecture that has been stated many times (\newstuff{see, e.g.,~\cite{LIPIcs.ICALP.2018.62,DBLP:conf/soda/BackursIS17,DBLP:conf/icalp/KunnemannPS17,DBLP:journals/dcg/ChanH21, DBLP:journals/algorithmica/RodittyZ11, DBLP:conf/stoc/AbboudFKLM24,DBLP:conf/icalp/BringmannC22})}:

\begin{conjecture}[Combinatorial Matrix Multiplication]
For any $\alpha, \beta, \gamma, \eps > 0$, there is no
combinatorial algorithm for multiplying an $n^\alpha \times n^\beta$ matrix with an $n^\beta \times n^\gamma$ matrix in time
$O(n^{\alpha+\beta+\gamma-\eps})$.
\end{conjecture}
\newstuff{
As has been noted in~\cite{LIPIcs.ICALP.2018.62}, there is no \emph{precise} definition of what combinatorial means.
However, the algebraic tricks that are used in FFT as well as any currently known fast matrix multiplication algorithm are clearly non-combinatorial by the above conjecture.
}

Combinatorial Matrix Multiplication reduces to the Hamming Distance Oracle Problem~\cite{boneh2024hammingdistanceoracle}, which is defined as follows.
Given two binary strings $S, T$, preprocess these strings to allow for queries that ask, given two length $\ell$ substrings of $S$ and $T$, what is their Hamming distance.%\todo{do we mind that [2] only is an arxiv-preprint since 2024? Have we checked it manually?}
\begin{theorem}[\newstuff{by {\cite[Theorem~3]{boneh2024hammingdistanceoracle}}}]
Unless the Combinatorial Matrix Multiplication Conjecture fails, there is no combinatorial data structure for the Hamming Distance Oracle Problem with query time $Q$ and preprocessing time $P$ such that $P \cdot Q \in O((|S| \cdot |T|)^{1-\eps})$.
\end{theorem}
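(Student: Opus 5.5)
The approach is a direct reduction from rectangular combinatorial Boolean matrix multiplication, which the Combinatorial Matrix Multiplication Conjecture rules out in truly subcubic time, to a batch of Hamming Distance Oracle queries. The key identity is that for $0/1$ vectors $x,y$ of length $m$, the Hamming distance is $|x|+|y|-2\langle x,y\rangle$. Hence an inner product, and so a Boolean product entry, can be read off from one Hamming query plus the two precomputed weights $|x|$ and $|y|$.

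\textbf{The encoding.} Let $A$ be an $n^\alpha\times n^\beta$ Boolean matrix and $B$ an $n^\beta\times n^\gamma$ Boolean matrix, and set $m=n^\beta$.
\begin{enumerate}
\item Let $S$ be the concatenation of the rows $a_1,\dots,a_{n^\alpha}$ of $A$, so $|S|=n^{\alpha+\beta}$.
\item Let $T$ be the concatenation of the columns $c_1,\dots,c_{n^\gamma}$ of $B$, so $|T|=n^{\beta+\gamma}$.
\item Each row $a_i$ is the aligned length-$m$ substring of $S$ starting at position $(i-1)m+1$. Likewise each column $c_k$ is an aligned length-$m$ substring of $T$. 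No query substring straddles two blocks, so no separators are needed.
\item Compute all row weights $|a_i|$ and column weights $|c_k|$ in linear time.
\item For each pair $(i,k)$, issue one query with $\ell=m$. From the answer, recover $\langle a_i,c_k\rangle$ and set $(AB)_{ik}=1$ iff this inner product is positive.
\end{enumerate}
The total time is $P+n^{\alpha+\gamma}Q+O(n^{\alpha+\beta}+n^{\beta+\gamma})$. Every step besides the oracle is plainly combinatorial, so a combinatorial oracle yields a combinatorial multiplication algorithm.

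\textbf{Choosing the exponents.} Suppose, for contradiction, that $P\cdot Q\le (|S|\cdot|T|)^{1-\eps}$ on all inputs. I would treat $P=N^{x}$ and $Q=N^{1-\eps-x}$ as functions of $N=|S|\cdot|T|$ with some fixed exponent $x\in[0,1-\eps]$. Otherwise, pass to a subsequence of input sizes on which the exponent converges, which costs only an arbitrarily small loss in $\eps$. Then pick the shape to make the reduction subcubic.
\begin{enumerate}
\item Take $\alpha=\gamma$ and $\beta=1$, so that $N=n^{2\alpha+2}$.
\item The requirement $P=o(n^{2\alpha+1})$ becomes $(2\alpha+2)x<2\alpha+1$.
\item The requirement $n^{2\alpha}Q=o(n^{2\alpha+1})$ becomes $(2\alpha+2)(1-\eps-x)<1$.
\item Adding the two inequalities gives $(2\alpha+2)(1-\eps)<2\alpha+2$, which always holds. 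So there is slack of order $\eps(2\alpha+2)$ to split between them.
\item Choose $\alpha$ (possibly $\alpha<1$ or $\alpha>1$, depending on $x$) so that both inequalities hold with margin $\eps'=\Theta(\eps)$. If $x$ is near $0$ or near $1-\eps$, instead take $\beta\ne 1$, or set $\alpha\ne\gamma$ to unbalance $|S|$ and $|T|$.
\end{enumerate}
This produces a combinatorial $O(n^{\alpha+\beta+\gamma-\eps'})$ algorithm for rectangular Boolean matrix multiplication, contradicting the conjecture. The conjecture is quantified over all $\alpha,\beta,\gamma$, which is exactly what permits this adaptive choice of shape.

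The \textbf{main obstacle} I expect is the parameter-balancing step rather than the encoding. Only the product $P\cdot Q$ is bounded, so the split between preprocessing and query cost is unknown. The rectangular shape therefore has to be tuned to the unknown split, and the extreme cases need separate handling: $x\to 0$ means cheap preprocessing but queries of cost about $N^{1-\eps}$, and $x\to 1-\eps$ means almost free queries. The extreme regimes may need $\alpha$ or $\gamma$ to tend to $0$ or to be large. I would first verify the balanced case $x=(1-\eps)/2$ with $\alpha=\gamma=\beta=1$, then extend by a continuity argument in the exponents.
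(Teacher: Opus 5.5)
Your encoding is correct, and it is the reduction the paper attributes to Boneh et al. (the paper itself only cites that theorem). The rows of $A$ are concatenated into $S$ and the columns of $B$ into $T$. Each entry of the product then costs one aligned length-$m$ query, via $\langle a_i,c_k\rangle=(|a_i|+|c_k|-H)/2$.

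\textbf{The balancing step is flawed.} Adding your two inequalities gives only a necessary condition. For a general shape, set $D=\alpha+2\beta+\gamma$ and $s=\beta/D$. The two requirements become $x<1-s$ and $1-\varepsilon-x<s$. So the shape matters only through $s$, and $s<1/2$ for every shape. Hence a good shape exists if and only if $x>1/2-\varepsilon$.

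Your proposed remedy for $x$ near $0$ (taking $\beta\neq 1$ or $\alpha\neq\gamma$) therefore cannot work. Nothing else can rescue that regime either. The oracle that does no preprocessing and answers each query by scanning has $P\cdot Q=O(\min(|S|,|T|))$, so the statement is false there. It holds only under the usual convention $P=\Omega(|S|+|T|)$.

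That convention is the ingredient you are missing. It gives $P\ge\sqrt{|S|\,|T|}$, i.e.\ $x\ge 1/2$, and then your balanced shape already covers every case. For $x\in[1/2,1-\varepsilon]$, take $t=1-x-\varepsilon/2\in(0,1/2)$, $\beta=1$, and $\alpha=\gamma=1/(2t)-1>0$. Both inequalities then hold with slack $\varepsilon/2$ in the exponent of $N$. The end $x\to 1-\varepsilon$ needs no special treatment; it is just $\alpha$ large.

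\textbf{The subsequence step is also insufficient.} Passing to lengths on which $\log P/\log N$ converges yields a fast multiplication algorithm only for infinitely many $n$. That does not contradict the conjecture, which forbids an $O(n^{\alpha+\beta+\gamma-\varepsilon})$ bound for all large $n$ of one fixed shape. A sparse subsequence cannot be reached by padding.

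You have two fixes. You can assume, as is customary, that $P$ and $Q$ are polynomials with fixed exponents. Alternatively, avoid tuning the shape altogether: keep square $n\times n$ instances and cut $A$ and $B$ into $r\times k$ and $k\times r$ blocks with $rk=L=n$. Build one oracle per pair of an output block and an inner block. The total cost is $n^3\bigl(P(L)\,k/L^2+Q(L)/k\bigr)$.

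Choosing $k\approx L\sqrt{Q/P}$ (a power of two) gives $O(n^3\sqrt{PQ}/L)$, which is $\tilde{O}(n^{3-\varepsilon})$ for every $n$. If $P$ and $Q$ are unknown, run the $O(\log n)$ candidates for $k$ in parallel. Here $k\le L$ requires $Q\le P$, which again follows from $P\ge L$.
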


Recall the \emph{Two Intervals Matching Problem}.
The data structure is given \(p \in \N\) and two point sets \(B, R \subset \R\) to preprocess. Now it should answer the following queries: Given two contiguous intervals $I_B, I_R \subset \R$ such that $|B \cap I_B| = |R \cap I_R|$, return the smallest matching cost between $B \cap I_B$ and $R \cap I_R$.
We obtain the following lower bound:
\hamminglowerbound*

\begin{proof}
% Define 0 and 1 gadgets
In the following we use the notation $S + x \coloneqq \{s + x \mid s \in S\}$, where $S \subset \R$ and $x \in \R$.
We first define the gadgets that we use to encode 0s and 1s:
\[
G(0) = \{1, 4\},\quad G(1) = \{2, 3\}.
\]
To understand why we choose these gadgets, consider their $p$-Wasserstein distances when placed some distance $M$ apart:
\begin{align*}
    Y(M) &\coloneqq w(G(0), G(0) + M) = w(G(1), G(1) + M) = 2M^p\\
    Z(M) &\coloneqq w(G(0), G(1) + M) = w(G(1), G(0) + M) = (M+1)^p + (M-1)^p > 2M^p.
\end{align*}
Note that gadgets of the same type result in a smaller distance than matching gadgets of a different type. Hence, this can encode the Hamming Distance on bit level.

% Define A and B
Let $S \in \{0,1\}^n$ and $T \in \{0,1\}^m$ be the two input strings of the Hamming Distance Oracle Problem.
Now we define the point sets $B, R \subset \N$ from the Two Intervals Matching Problem that we use to encode the input of the Hamming Distance Oracle Problem.
\[
B \coloneqq \{ G(S[i]) + 4i \mid i \in \{1, \dots, n\} \},\quad R \coloneqq \{ G(T[i]) + 4i \mid i \in \{1, \dots, m\} \} + 4n.
\]
% This finishes the reduction from $S,T$ to $A,B$.
% Define how queries translate
It remains to specify how we reduce the queries.
A query for the Hamming Distance between $S[i..i+\ell-1]$ and $T[j..j+\ell-1]$ is transformed into a query to the Two Intervals Matching Problem with intervals $I_B = [4i+1,4(i+\ell)]$ and $I_R = [4j+1,4(j+\ell)]$.
% This returns a distance of the form $(4n)^2 + 2*H$, where $H$ is the desired Hamming Distance for the query to $S$ and $T$.
As the optimal matching matches the points in order, this query returns a distance of the form $k \cdot Z(4|j - i|) + (\ell-k) \cdot Y(4|j - i|)$, where $k$ is the Hamming Distance between $S[i \dots i+\ell]$ and $T[j \dots j+\ell]$.

This finishes the description of the reduction.
As $|B| \in \Theta(|S|)$ and $|R| \in \Theta(|T|)$, the lower bound for the Hamming Distance Oracle Problem simply carries over.
\end{proof}

Now we turn our attention to lower bounds to data structures that may utilize non-combinatorial approaches.
We notice that the reduction used by \cite{boneh2024hammingdistanceoracle} to obtain the combinatorial lower bound for Hamming Distance Oracle can actually be directly used to obtain a non-combinatorial lower bound.
Concretely, given two binary $n \times n$ matrices $A,B$ whose product $C$ we want to compute, we can encode each of \(A\) and \(B\) in a binary string of length $O(n^2)$ using the same encoding as in~\cite{boneh2024hammingdistanceoracle}.
Using a single Hamming distance query to $S,T$, we can thereby compute one entry of $C$; hence, we can compute all of $C$ using $n^2$ Hamming distance queries.
Let $N \coloneqq n^2$.
The above discussion implies that for the Hamming Distance Oracle problem on two strings of length $O(N)$, the preprocessing time plus $N$ queries cannot be faster than $\Omega(n^\omega) = \Omega(N^{\omega/2})$.
Thus, with the same reduction that we applied to obtain \Cref{thm:hamminglowerbound}, we get our non-combinatorial lower bound.

\hamminglowerboundNonCombinatorial*

At this point, we want to emphasize that for the Balanced Interval Matching Problem, we achieved near-linear preprocessing- and logarithmic query time.
\Cref{thm:hamminglowerboundNonCombinatorial} shows that this and even almost-linear preprocessing- and subpolynomial query time is not possible, unless matrix multiplication is possible in almost-quadratic time.

\end{document}